\newtheorem{remark}{Remark}
\newtheorem{proposition}{Proposition}
\newtheorem{theorem}{Theorem}
\newtheorem{corollary}{Corollary}
\newtheorem{definition}{Definition}
\newtheorem{lemma}{Lemma}
\DeclareMathOperator{\Equaldef}{\overset{def}{=}}
\title{\LARGE \bf
Rationality and connectivity in stochastic learning for networked coordination games}
\author{Yifei Zhang and Marcos M. Vasconcelos  % <-this % stops a space
%\thanks{*This work was not supported by any organization}% <-this % stops a space
\thanks{Y. Zhang and M. M. Vasconcelos are with the Department of Electrical and Computer Engineering, FAMU-FSU College of Engineering, Florida State University,  Tallahassee, FL 32306, USA. E-mails:
        {\tt   yz23r@fsu.edu, m.vasconcelos@fsu.edu}.}%
}
\begin{document}

\maketitle
\thispagestyle{empty}
\pagestyle{empty}

%%%%%%%%%%%%%%%%%%%%%%%%%%%%%%%%%%%%%%%%%%%%%%%%%%%%%%%%%%%%%%%%%%%%%%%%%%%%%%%%
\begin{abstract}

Coordination is a desirable feature in many multi-agent systems such as robotic and socioeconomic networks. We consider a task allocation problem as a binary networked coordination game over an undirected regular graph. Each agent in the graph has bounded rationality, and uses a distributed stochastic learning algorithm to update its action choice conditioned on the actions currently played by its neighbors. After establishing that our framework leads to a potential game, we analyze the regime of bounded rationality, where the agents are allowed to make sub-optimal decisions with some probability. Our analysis shows that there is a relationship between the connectivity of the network, and the rationality of the agents. In particular, we show that in some scenarios, an agent can afford to be less rational and still converge to a near optimal collective strategy, provided that its connectivity degree increases. Such phenomenon is akin to the \textit{wisdom of crowds}. 
\end{abstract}

%%%%%%%%%%%%%%%%%%%%%%%%%%%%%%%%%%%%%%%%%%%%%%%%%%%%%%%%%%%%%%%%%%%%%%%%%%%%%%%%
\section{Introduction}

In strategic decision-making over a social network, every agent must take into account the behavior of its neighbors when choosing an action. Agents in the system are often characterized by a certain level of rationality, which can be modeled on how much weight (if any) is assigned to the actions taken by its neighbors. In one extreme, a completely irrational agent makes decisions by randomly picking an action over a finite discrete set with a uniform distribution. On the opposite extreme, a fully rational agent optimizes its local objective taking into account the actions played by its neighbors, in other words, the agent plays a \textit{best-response}. Somewhere in between these extremes, an agent with bounded rationality will definitely make mistakes, resulting in loss in performance.

In this paper, we investigate how rationality and connectivity interact within a simple learning in games setting \cite{Fudenberg:1998}. In particular, we are interested in establishing whether highly connected agents need to be more or less rational during the process of learning how to play a Nash equilibrium. We show that for a coordination game defined over a network with fixed connectivity degree (regular graph), there is a regime in which higher connectivity implies that the agents can afford to be less rational. This is somewhat related to the \textit{wisdom of the crowds} \cite{Becker:2017} in which an optimal aggregate decision emerges from individual acting independently.

%\newpage

%\begin{figure}[t!]
%    \centering
%\includegraphics[width=0.8\columnwidth]
%{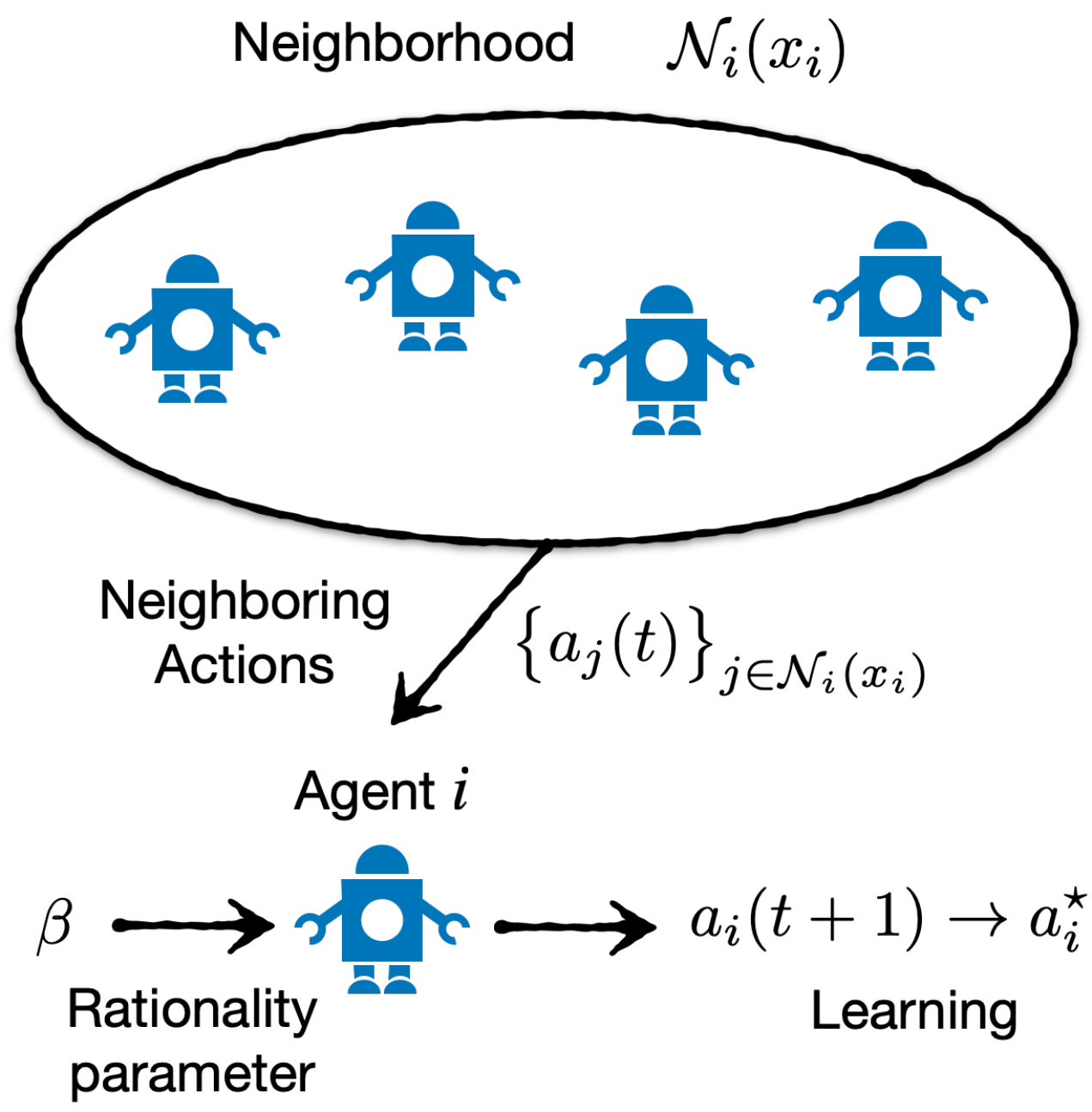}
%    \caption{Framework for the log-linear learning process over a graph.}
%    \label{fig:LLL}
%\end{figure}

%\subsection{Related literature}

Coordination games are characterized by the existence of multiple Nash equilibria where the agents play the same action \cite{Easley:2010}. These types of games can be used to model many social-economic phenomena such as technology adoption \cite{Montanari:2010}, political revolutions \cite{Mahdavifar:2017}, task collaboration in robotic networks \cite{Kanakia:2016,Wei:2023,Vasconcelos:2023} and microbiology \cite{Vasconcelos:2022}.  Network games consider the interaction of strategic agents over a graph, and how the graph influences the structure of equilibria \cite{Jackson:2015,Apt:2017,Parise:2021}. A model of networked coordination games subject to cyber-security attacks have been considered in \cite{Paarporn:2021,Paarporn:2021b}. The topic of learning in networked coordination games was investigated in \cite{Arditti:2021,Arditti:2023} for fully rational agents under the best-response dynamics.

The centerpiece of this paper is the analysis of a non-trivial behavior between the rationality of agents and their connectivity in a specific class of coordination games. We look at the convergence of the learning process to a Nash equilibrium when the agents have bounded rationality. This is the case when the agents are not choosing actions which are \textbf{not} the best-response to the actions of their neighbors at any given time. A learning model that captures the rationality of an agent is called \textit{Log-Linear Learning} (LLL) \cite{blume1993statistical,marden2012revisiting}.

LLL is intimately related to the class of potential games \cite{monderer1996potential}. A seminal result shows that when the agents in a potential game use LLL, they converge to one of the Nash equilibria of the game in probability, as the rationality parameter tends to infinity. While in the asymptotic regime, the agents behave predictably, this situation rarely occurs in practice as humans tend to behave with varying degrees of (bounded) rationality. In this regime, the network's connectivity seems to play a very important role. In the bounded rationality regime, a consensus configuration can only be guaranteed with high probability, but never with probability $1$.  
Our main contribution is to show that the minimum level of rationality required to achieve a certain level of coordination may decrease as the connectivity degree of the graph increases. We investigate this issue by constraining our setup to regular graphs \cite[and references therein]{Yaziciouglu:2015}, which greatly reduces the overall complexity of the analysis.

The rest of the paper is organized as follows. In Section II, we introduce our task allocation networked coordination game. In Section III, we show that when the network is described by a regular graph of $N$ vertices and degree $K$, the game is potential. In Section IV, we obtain an alternative expression for the potential function. In Section V, we show that the only maximizers of the potential function are the two consensus configurations and we obtain a complete characterization of when each one is optimal. In Section VI, we obtain the relationship between rationality and connectivity for a large subset of the games considered herein. In Section VII, we present a numerical example. The paper concludes in Section VIII, where we present future research directions.

\section{Problem setup}

Here we define a class of binary networked coordination games that will be the focus of this paper. Let $[N]\Equaldef\{1,2,\ldots, N\}$ denote the set of agents in the network described by an undirected and connected graph $\mathcal{G}\Equaldef([N],\mathcal{E})$.
Two nodes $i,j \in[N]$ are  connected if $(i,j)\in \mathcal{E}$. The set of neighbors of agent $i$ is denoted by $\mathcal{N}_{i} \Equaldef \{j\in [N] \mid (i,j)\in \mathcal{E}\}$. The number of neighbors of agent $i$ is denoted by $|\mathcal{N}_i|$. We assume there are no self loops, i.e., $(i,i) \notin \mathcal{E},$ $i\in[N]$.

Let $(i,j)\in\mathcal{E}$, and suppose that $a_i,a_j\in\{0,1\}$ are the actions played by agents $i$ and $j$, respectively. For $\theta \in \mathbb{R}$, the following bi-matrix game specifies the payoffs for the pairwise interaction between $i$ and $j$.

\begin{figure}[h!]\hspace*{\fill}%
\centering
\begin{game}{2}{2}[$a_i$][$a_j$]
     & $1$ & $0$             \\
 $1$ & $\Big(\frac{1}{|\mathcal{N}_i|}-\frac{\theta}{N},\frac{1}{|\mathcal{N}_j|}-\frac{\theta}{N}\Big)$ & $\Big(-\frac{\theta}{N},0\Big)$ \\
 $0$ & $\Big(0,-\frac{\theta}{N}\Big)$ & $\Big(0,0\Big)$  \\
\end{game}\hspace*{\fill}%
\caption[]{A coordination game with parameter $\theta$ between two players.}
\label{bimatrix}
\end{figure}

%Our framework focuses on the class of binary action graphical coordination games.

%Consider a networked coordination game, where the collection of agents {$[N]=\{1,2,\cdots, N\}$} in the network interacting via an undirected and connected graph $\mathcal{G}=([N],\mathcal{E})$ with definite difficulty. 

%Players are connected to their neighbors in the network, every player has a 

%Actions are binary, that is, $a_i\in\{0,1\}$, $i\in[N]$.

\vspace{5pt}

\begin{remark}[Payoff interpretation]
The payoff structure of the bi-matrix game in \cref{bimatrix} corresponds to a coordination game between two agents. Notice the payoff matrix depends on a parameter $\theta \in \mathbb{R}$, which we refer to as a \textit{task difficulty}. The difficulty is amortized over the total number of agents in the system, $N$. We refer to them as subtasks of difficulty $\theta/N$. However, when agent $i$ decides to take on a subtask, it contributes $1/|\mathcal{N}_i|$ units of effort towards the subtask of each of its neighbors. Therefore, depending on $\theta$, a single agent may not be able to carry out its subtask by itself. This payoff structure simultaneously captures coordination and the graph structure of a networked system in a distributed task allocation problem.
\end{remark}

A binary coordination game between two players is characterized by the existence of two pure strategy Nash equilibria. The following result establishes the range of values of $\theta$ for which the game in \cref{bimatrix} corresponds to a coordination game.

%Based on the concept of the coordination games and the Nash equilibrium, under the following situations, the bi-matrix game between two players can be different types of games:
\vspace{5pt}

\begin{proposition}\label{pure_strategy_equilibria}
    Consider the bimatrix game in \cref{bimatrix}, and $\mathcal{S}_{ij}$ denote its set of pure-strategy Nash equilibria. Let $M_{ij} \Equaldef N/\max\{|\mathcal{N}_{i}|,|\mathcal{N}_{j}|\}$. The following holds:
\begin{equation}
\mathcal{S}_{ij} = 
\begin{cases}
\big\{ (0,0) \big\} & \text{if} \ \  \theta > M_{ij} \\
\big\{ (0,0),(1,1) \big\} & \text{if} \ \ 0 \leq \theta \leq M_{ij} \\
\big\{ (1,1) \big\} & \text{if} \ \ \theta \leq 0.
\end{cases}
\end{equation}
\end{proposition}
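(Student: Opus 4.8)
The plan is to verify the claim directly by enumerating the four pure-strategy profiles and checking, for each, whether either player has a profitable unilateral deviation. Since the payoff table in \cref{bimatrix} is symmetric under interchanging the roles of $i$ and $j$, it suffices to write out the incentive constraints once and invoke symmetry. First I would recall that a profile $(a_i,a_j)$ is a pure-strategy Nash equilibrium precisely when each player weakly prefers its prescribed action to its alternative, with the opponent's action held fixed.

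For the profile $(1,1)$, player $i$ earns $1/|\mathcal{N}_i|-\theta/N$ and would earn $0$ by deviating to $0$, so $i$ does not deviate iff $\theta \le N/|\mathcal{N}_i|$; symmetrically $j$ stays iff $\theta \le N/|\mathcal{N}_j|$. Combining these and using the identity $\min\{N/|\mathcal{N}_i|,\,N/|\mathcal{N}_j|\}=N/\max\{|\mathcal{N}_i|,|\mathcal{N}_j|\}=M_{ij}$, I obtain that $(1,1)$ is an equilibrium iff $\theta \le M_{ij}$. For the profile $(0,0)$, each player earns $0$ and would earn $-\theta/N$ by switching to $1$, so neither deviates iff $\theta \ge 0$; hence $(0,0)$ is an equilibrium iff $\theta \ge 0$.

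It then remains to rule out the two asymmetric profiles. For $(1,0)$, player $i$'s stay-condition is $-\theta/N \ge 0$, i.e. $\theta \le 0$, while player $j$'s stay-condition is $0 \ge 1/|\mathcal{N}_j|-\theta/N$, i.e. $\theta \ge N/|\mathcal{N}_j| > 0$; these are incompatible, so $(1,0)$ is never an equilibrium, and the same argument with roles swapped eliminates $(0,1)$. Assembling the surviving possibilities yields the stated case split: when $\theta > M_{ij}$ only $(0,0)$ survives (here $M_{ij}>0$ forces $\theta \ge 0$, so $(0,0)$ is indeed an equilibrium), when $0 \le \theta \le M_{ij}$ both consensus profiles survive, and when $\theta \le 0$ only $(1,1)$ survives.

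I do not anticipate a genuine obstacle, since this is a finite case analysis; the only points requiring care are the min--max identity that collapses the two $(1,1)$-constraints into the single threshold $M_{ij}$, and the treatment of the boundaries $\theta = 0$ and $\theta = M_{ij}$, where indifference is counted as equilibrium via the weak inequalities. In particular the endpoint $\theta=0$ satisfies the equilibrium conditions for both consensus profiles, so the statement is to be read with equality resolved in favor of the larger equilibrium set at that boundary.
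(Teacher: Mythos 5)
Your proof is correct and is precisely the case-by-case inspection the paper itself invokes (its entire proof is a one-line appeal to ``inspection using the definition of a Nash equilibrium''); you simply write out the four incentive checks and the identity $\min\{N/|\mathcal{N}_i|,N/|\mathcal{N}_j|\}=M_{ij}$ explicitly. Your remark that the cases $0\le\theta\le M_{ij}$ and $\theta\le 0$ overlap at $\theta=0$, where both consensus profiles are in fact equilibria, is a fair catch of a minor imprecision in how the statement partitions the parameter range.
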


\vspace{5pt}

\begin{proof}
    The proof can be obtained by inspection using the definition of a Nash equilibrium \cite{Fudenberg:1998}.
\end{proof}

\vspace{5pt}

\subsection{Coordination games over networks}

We study a network coordination game with $N$ agents, where agent $i$ plays the same action with all of its neighbors $j\in\mathcal{N}_i$. Let $V_i:\{0,1\}^2\rightarrow \mathbb{R}$ be defined as:
\begin{equation}
V_i(a_i,a_j) \Equaldef a_i\Big(
\frac{a_j}{|\mathcal{N}_i|}-\frac{\theta}{N}\Big).
\end{equation}

In a network game, the payoff that one player receives is the sum of all the payoffs of the bi-matrix games $V_{i}(a_{i}, a_{j})$  played with each one of its neighbors. Therefore for the $i$-th player, the utility is determined as follows:
\begin{equation}
    U_{i}(a_{i},a_{-i})\Equaldef \sum\limits_{j\in \mathcal{N}_{i}} V_{i}(a_{i},a_{j}).
\end{equation}
Therefore, the payoff of the $i$-th agent in our game is
\begin{equation}
U_{i}(a_{i},a_{-i}) = a_i\Big(\frac{1}{|\mathcal{N}_i|}\sum_{j\in \mathcal{N}_i}a_j-\frac{|\mathcal{N}_i|}{N}\theta\Big).
\end{equation}

\section{A potential networked  coordination game}

We are interested in obtaining a potential networked coordination game. Potential games \cite{monderer1996potential} have many interesting properties. For example, a potential game implies the convergence of learning algorithms, as will be the case of log-linear learning, which will be the focus of the next section. In this section, we will show that network regularity implies the game is potential. We start with the definition of a potential game. 

\subsection{Potential Games}

\begin{definition}\label{exact_potential}
Let $\mathcal{A}_i$ denote the action set of the $i$-th agent in a game with payoff functions $U_i(a_i,a_{-i})$, $i\in[N]$. Let $\mathcal{A} = \mathcal{A}_1\times \cdots \times \mathcal{A}_n$. A game is an exact potential game if there is a so-called \textit{potential function} {$\Phi$}: 
 {$\mathcal{A}\rightarrow \mathbb{R}$} such that 
\begin{equation}\label{eq:exact_potential}
    U_{i}(a'_{i},a_{-i})- U_{i}(a''_{i},a_{-i}) = \Phi(a'_{i},a_{-i})- \Phi(a''_{i},a_{-i}),
\end{equation}
for all  $a_{i}',a_{i}''\in A_{i}$, $a_{-i}\in A_{-i}$,  $i \in [N]$.
\end{definition}

\vspace{5pt}

\begin{theorem}\label{theorem_EP}
Consider the networked
 coordination game indexed by the parameter $\theta$. Let $K\in\{2,\ldots, N-1\}$. If the network is regular, i.e., if $|\mathcal{N}_{i}|= K, \ i\in[N]$, the game is potential for any value of {$\theta$}. In particular, a potential function for this game is given by {$\Phi(a)$} defined as
\begin{equation}\label{potential_f}
    \Phi(a) \Equaldef \frac{1}{2}\sum\limits_{i\in [N]} \sum\limits_{j\in \mathcal{N}_{i}} \phi(a_{i},a_{j}),
\end{equation}
where 
\begin{equation}\label{eq:potential_two_agents}
\phi (a_i,a_j) \Equaldef \frac{a_ia_j}{K} + (1-a_i-a_j)\frac{\theta}{N}.
\end{equation}
\end{theorem}

\vspace{5pt}

\begin{proof}
For a regular graph of connectivity $K$, the coordination game is given by:
\begin{figure}[h!]\hspace*{\fill}%
\centering\label{fig:bimatrix}
\begin{game}{2}{2}[$a_i$][$a_j$]
     & $1$ & $0$             \\
 $1$ & $\Big(\frac{1}{K}-\frac{\theta}{N},\frac{1}{K}-\frac{\theta}{N}\Big)$ & $\Big(-\frac{\theta}{N},0\Big)$ \\
 $0$ & $\Big(0,-\frac{\theta}{N}\Big)$ & $\Big(0,0\Big)$  \\
\end{game}\hspace*{\fill}%
\caption{A coordination game with parameter $\theta$ between two players on a regular graph of degree $K$.}
\end{figure}

The game above is an exact potential game with potential function determined by the matrix in \cref{tab:exact_potential}. Therefore, the following holds:
\begin{equation}
\phi(a_i',a_j) - \phi(a_i'',a_j) = V_i(a_i',a_j) - V_i(a_i'',a_j),
\end{equation}
for all $a_i',a_i''\in\{0,1\}$ such that $a_i'\neq a_i''$.

\begin{figure}[h!]\hspace*{\fill}%
\centering
\begin{game}{2}{2}[$a_i$][$a_j$]
     & $1$ & $0$             \\
 $1$ & $\frac{1}{K}-\frac{\theta}{N}$ & $0$ \\
 $0$ & $0$ & $\frac{\theta}{N}$  \\
\end{game}\hspace*{\fill}%
\caption{The potential function for the bi-matrix coordination game between two players on a regular graph of degree $K$.}
\label{tab:exact_potential}
\end{figure}

The matrix above corresponds to $\phi$ defined in \cref{eq:potential_two_agents}. Define the function $\Phi: \{0,1\}^N \rightarrow \mathbb{R}$ such that
\begin{equation} \label{eq:potential_candidate}
    \Phi(\mathbf{a}) \Equaldef \frac{1}{2}\sum\limits_{i\in [N]} \sum\limits_{j\in \mathcal{N}_{i}} \phi(a_{i},a_{j}).
\end{equation}
We proceed by verifying that the function in \cref{eq:potential_candidate} satisfies the condition in \cref{eq:exact_potential}.

Let $m \in [N]$, and $a'_m,a''_m \in\{0,1\}$ such that $a'_m\neq a''_m$. Then,
\begin{multline}\label{globalpotental_minus}
     \Phi(a'_{m},a_{-m})-\Phi(a''_{m},a_{-m}) =  \\ \frac{1}{2} \sum_{i\in[N]}\sum_{j\in\mathcal{N}_i} \phi(a_i,a_j)\Bigg|_{(a_m',a_{-m})} \\  - \frac{1}{2} \sum_{i\in[N]}\sum_{j\in\mathcal{N}_i} \phi(a_i,a_j)\Bigg|_{(a_m'',a_{-m})}.
\end{multline}
Then, notice that
\begin{equation}
\sum_{i\in[N]}\sum_{j\in\mathcal{N}_i} \phi(a_i,a_j) = \sum_{j\in\mathcal{N}_m} \phi(a_m,a_j) + \sum_{i\neq m}\sum_{j\in\mathcal{N}_i} \phi(a_i,a_j).
\end{equation}

Recall that 
\begin{equation}
\phi(a_m',a_j) - \phi(a_m'',a_j) = V_m(a_m',a_j) - V_m(a_m'',a_j).
\end{equation}
Therefore,
\begin{multline}\label{eq:intermediate_step_potential}
     \Phi(a'_{m},a_{-m})-\Phi(a''_{m},a_{-m}) =  \\ \frac{1}{2} \sum_{j\in\mathcal{N}_m} \big[ V_m(a_m',a_j) - V_m(a_m',a_j) \big]\\ +
     \frac{1}{2} \sum_{i\neq m} \sum_{j\in\mathcal{N}_i} \phi(a_i,a_j)\Bigg|_{(a_m',a_{-m})} \\ -   \frac{1}{2} \sum_{i\neq m} \sum_{j\in\mathcal{N}_i} \phi(a_i,a_j)\Bigg|_{(a_m'',a_{-m})}.
\end{multline}

The first term in \cref{eq:intermediate_step_potential} is equal to
\begin{equation}\label{eq:half_difference_potential}
\frac{1}{2}\big[ U_m(a'_m,a_{-m}) - U_m(a''_m,a_{-m})\big].
\end{equation}
We proceed with showing that the other two terms also add up to the same value.

For all $i\neq m$ such that $m\notin \mathcal{N}_i$, 
\begin{equation}
\sum_{j\in\mathcal{N}_i} \phi(a_i,a_j)\Bigg|_{(a_m',a_{-m})} \\ = \sum_{j\in\mathcal{N}_i} \phi(a_i,a_j)\Bigg|_{(a_m'',a_{-m})}.
\end{equation}
Define the following set: 
\begin{equation}
\mathcal{S}_m \Equaldef\big\{i \mid  i \neq m \textup{ and } m\in\mathcal{N}_i \big\}
\end{equation}
and evaluate the difference
\begin{multline}
\sum_{i \in \mathcal{S}_m} \sum_{j\in \mathcal{N}_i} \phi(a_i,a_j) \Bigg|_{(a_m',a_{-m})} \\ - \sum_{i \in \mathcal{S}_m} \sum_{j\in \mathcal{N}_i} \phi(a_i,a_j) \Bigg|_{(a_m'',a_{-m})},
\end{multline}
which is equal to 
\begin{multline}\label{eq:difference}
\sum_{i \in \mathcal{S}_m}  \phi(a_i,a'_m) \Bigg|_{(a_m',a_{-m})} \! \! \! - \sum_{i \in \mathcal{S}_m}  \phi(a_i,a_m'') \Bigg|_{(a_m'',a_{-m})}.
\end{multline}
Since 
\begin{equation}
\phi(a_i,a_j) = \phi(a_j,a_i), 
\end{equation}
we have that \cref{eq:difference} is equal to
\begin{equation}
\sum_{i\in\mathcal{S}_m} \phi(a_m',a_i) - \phi(a_m'',a_i).
\end{equation}

Finally, since $\phi$ is a potential function for the two-player game, we have:
\begin{multline}
\sum_{i\in\mathcal{S}_m} V_m(a_m',a_i) - V_m(a_m'',a_i) = \\ U_m(a_m',a_{-m}) - U_m(a_m'',a_{-m}).
\end{multline}
\end{proof}

We have established that this networked coordination game admits an exact potential function when the graph is regular. In the next section, we will obtain a closed form expression for the potential function
for this game, and prove that the only two possible Nash equilibria are $\mathbb{0}_N$ and $\mathbb{1}_N$.

\section{Alternative form for the potential function}

\begin{proposition}
The potential function $\Phi$ for the coordination game over a regular network is given by
\begin{equation}\label{potential_function}
\Phi(a) = \frac{\theta K }{2} - \frac{\theta K}{N} \sum_{i\in[N]} a_i + \frac{1}{2K} a^\mathsf{T}\mathbf{A} a,
\end{equation}
where $\mathbf{A}$ is the adjacency matrix for the undirected regular connected graph that describes the network, $\theta$ is the task difficulty, $K$ is the degree of the regular graph $\mathcal{G}$, $N$ is the number of agents in the network, $a$ is the action profile, and $a_{i}$ is the $i$-th component of $a$.
\end{proposition}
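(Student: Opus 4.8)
The plan is to prove the identity by direct substitution of the pairwise potential $\phi$ from \cref{eq:potential_two_agents} into the definition \cref{potential_f}, and then decompose the resulting double sum into separable contributions: a bilinear term in the product $a_i a_j$, and two affine terms coming from the constant $1$ and the linear $-a_i - a_j$ inside $\phi$. Writing
\[
\Phi(a) = \frac{1}{2K}\sum_{i\in[N]}\sum_{j\in\mathcal{N}_i} a_i a_j + \frac{\theta}{2N}\sum_{i\in[N]}\sum_{j\in\mathcal{N}_i}\big(1 - a_i - a_j\big)
\]
reduces the problem to evaluating four elementary double sums over the edge set.

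The key step is to recognize each of these sums in terms of the graph data. First I would observe that, since the adjacency matrix satisfies $\mathbf{A}_{ij}=1$ exactly when $j\in\mathcal{N}_i$, the bilinear sum is precisely the quadratic form $\sum_{i,j}\mathbf{A}_{ij} a_i a_j = a^\mathsf{T}\mathbf{A} a$, producing the term $\frac{1}{2K}a^\mathsf{T}\mathbf{A}a$. For the affine part I would invoke regularity, $|\mathcal{N}_i| = K$ for every $i$, to get $\sum_i\sum_{j\in\mathcal{N}_i} 1 = \sum_i |\mathcal{N}_i| = NK$ and $\sum_i\sum_{j\in\mathcal{N}_i} a_i = \sum_i a_i |\mathcal{N}_i| = K\sum_i a_i$. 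The remaining sum $\sum_i\sum_{j\in\mathcal{N}_i} a_j$ requires swapping the order of summation: because the graph is undirected, $j\in\mathcal{N}_i \iff i\in\mathcal{N}_j$, so each $a_j$ is counted once per neighbor of $j$, giving $\sum_j a_j |\mathcal{N}_j| = K\sum_j a_j$ again by regularity. Substituting these into the affine part yields $\frac{\theta}{2N}\big(NK - 2K\sum_i a_i\big) = \frac{\theta K}{2} - \frac{\theta K}{N}\sum_i a_i$, and adding the bilinear term gives exactly \cref{potential_function}.

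The computation is essentially routine, so I do not anticipate a genuine obstacle; the one place demanding care is the term $\sum_i\sum_{j\in\mathcal{N}_i} a_j$, where the double-counting argument must explicitly use the symmetry of the adjacency relation together with regularity to conclude that it equals $K\sum_i a_i$ rather than an expression depending on individual degrees. It is precisely here that the regularity hypothesis of \cref{theorem_EP} is needed to collapse the two affine sums into a single linear term with coefficient $\theta K/N$. Once that identity is in hand, the three contributions combine immediately into the claimed closed form.
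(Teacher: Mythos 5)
Your proposal is correct and follows essentially the same route as the paper: direct substitution of $\phi$ from \cref{eq:potential_two_agents} into \cref{potential_f}, identifying the bilinear sum with the quadratic form $a^\mathsf{T}\mathbf{A}a$, and handling the affine part via the identity $\sum_{i\in[N]}\sum_{j\in\mathcal{N}_i}a_j = K\sum_{i\in[N]}a_i$, which the paper likewise singles out as the one step requiring attention and proves by the same double-counting argument using undirectedness and regularity. Your write-up simply makes explicit the elementary sums that the paper dismisses as immediate.
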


\begin{proof}
The proof is immediate from equations \cref{potential_f,eq:potential_two_agents}. There is only one step that requires some attention, and that is with the following identity:
\begin{equation}\label{eq:identity}
\sum_{i\in [N]} \sum_{j\in\mathcal{N}_i}a_j = K \sum_{i\in[N]}a_i.
\end{equation}
To prove it, notice that the left hand side of \cref{eq:identity} corresponds to the number of active agents in each of the $N$ local neighborhoods of $\mathcal{G}$. Since each agent belongs to exactly $K$ neighborhoods, every active agent is counted $K$ times. Therefore, this sum is equal to $K\cdot\mathbb{1}_N^\mathsf{T}a$. 
\end{proof} 

The seminal result obtained by Monderer and Shapley \cite{monderer1996potential} establishes that the set of Nash equilibria of a potential game coincides with the sets of maximizers of the potential function. Therefore, we procced with the characterization of the solution set of the following optimization problem:
\begin{equation}
\begin{aligned}\label{OriginalProblem}
& \underset{a}{\mathrm{minimize}}
& & \frac{1}{2}K\theta-\frac{K\theta}{N}\sum_{i\in [N]}a_{i} + \frac{1}{2K}a^{\mathsf{T}}\mathbf{A} a \\
& \text{subject to}
& & a \in {\{ 0,1 \}}^{N}.
\end{aligned}
\end{equation}

%The following expression is the system-level potential function for a regular graphical coordination game:
%\begin{equation}
%\Phi^{r}(\mathbf{a}) = \frac{1}{2}k\theta-\frac{k\theta}{N}\sum_{i\in [N]}a_{i} + \frac{1}{2k}(\mathbf{a}^{\mathbf{T}}\cdot L\cdot \mathbf{a}).
%\end{equation}

%In order to maximize the overall potential, we observe the behavior of all agents and formulate the following optimization problem:

Before finding the maximizers of the problem in \cref{OriginalProblem}, we enumerate the following  useful facts:

\begin{enumerate}[(a)]
\item If $\mathbf{A}$ is the adjacency matrix of a \textit{regular} graph of degree $K$, then $\mathbb{1}_N \in \mathbb{R}^N$ is an eigenvector of $\mathbf{A}$, whose corresponding eigenvalue is $\lambda_1 = K $. 

 \item The largest  eigenvalue (in magnitude) of $\mathbf{A}$ is $\lambda_1 = K$. 
 
\item Since $\mathbf{A}$ is real and symmetric, it has $N$ orthogonal eigenvectors $\mathbb{e}_1$, $\mathbb{e}_2$, \dots, $\mathbb{e}_N$. These eigenvectors form an orthogonal basis for $\mathbb{R}^N$.

 \item The graph $\mathcal{G}$ is connected if and only if $\lambda_1 = K$ has multiplicity one.
 \end{enumerate}

\section{Maximizers of the potential function}

\subsection{An equivalent optimization problem}

Consider the orthonormal basis of $\mathbb{R}^N$: $\{ \mathbb{e}_1, \mathbb{e}_2, \dots, \mathbb{e}_N \}$, i.e., the set of eigenvectors of $\mathbf{A}$. Therefore, any action profile $a\in\{0,1\}^{N}$ can be written as a linear combination of $\{\mathbb{e}_i\}_{i=1}^N$. That is,
\begin{equation}
a = \sum_{t=1}^N c_{t}\mathbb{e}_{t}, 
\end{equation}
where $\mathbb{e}_t$ is an eigenvector of $\mathbf{A}$
 such that ${\|\mathbb{e}_t\|}_{2}^2 = 1$,  for $t \in[N]$.
Thus, the following identities hold:
\begin{equation}\label{QF}
\begin{aligned}
    a^{\mathsf{T}} \mathbf{A}  a &= \Big\langle \sum_{t=1}^N c_{t}\mathbb{e}_{t}, \sum_{t=1}^N c_{t}\mathbf{A}\mathbb{e}_{t} \Big\rangle \\
    &= \Big\langle \sum_{t=1}^N c_{t}\mathbb{e}_{t}, \sum_{t=1}^N c_{t}\lambda_{t}\mathbb{e}_{t}  \Big\rangle \\
    &= \sum_{t=1}^{N} \lambda_{t}c_{t}^2, 
\end{aligned}
\end{equation}
where $\lambda_t$ is the $t$-th largest eigenvalue (not necessarily distinct) of $\mathbf{A}$ corresponding to $\mathbb{e}_t$.
Since $a$ is a binary vector, we have that $\sum_{i \in [N]} a_i = {\|a\|}_1 = {\|a\|}_{2}^2$. Then,
\begin{equation}\label{LF}
        \sum_{i \in [N]} a_i = \langle a,a \rangle 
                             = \Big\langle \sum_{t=1}^N c_{t}\mathbb{e}_{t},\sum_{t=1}^N c_{t}\mathbb{e}_{t}\Big\rangle 
                             = \sum_{t=1}^{N} c_{t}^2.
\end{equation}
Using \cref{QF,LF}, we can rewrite the potential function as:
\begin{equation}
\begin{aligned}
     \Phi(a) &= \frac{1}{2}K\theta - \frac{K\theta}{N}\sum_{t=1}^{N} c_{t}^2 + \frac{1}{2K}\sum_{t=1}^{N} \lambda_{t} c_{t}^2 \\ 
    % &= \frac{1}{2}K\theta + \sum_{t=1}^{N} \Big(\frac{1}{2K}\lambda_{t}c_{t}^2 - \frac{K\theta}{N}c_{t}^2\Big)\\
     &= \frac{1}{2}K\theta + \sum_{t=1}^{N} \Big(\frac{\lambda_t}{2K}-\frac{K\theta}{N}\Big)c_{t}^2.
\end{aligned}
\end{equation}
Thus, the optimization problem in \cref{OriginalProblem} can be reformulated in terms of $c_1,c_2,\dots, c_N$:
%\begin{equation}
%\begin{aligned}
%\max_{c_1,c_2,\dots,c_N} \quad & \frac{1}{2}k\theta + \sum_{t=1}^{N} (\frac{\lambda_t}{2k}-\frac{k\theta}{N})c_{t}^2 \\
%\mbox{s.t.} \quad
%&\sum_{t=1}^N c_{t}e_{t} \in {\{ 0,1 \}}^{N}\\
%\end{aligned}
%\end{equation}

\begin{equation}
\begin{aligned}\label{SpectralProblem}
& \underset{c_1,c_2,\dots,c_N}{\mathrm{maximize}}
& & \frac{1}{2}K\theta + \sum_{t=1}^{N} \Big(\frac{\lambda_t}{2K}-\frac{K\theta}{N}\Big)c_{t}^2 \\
& \text{subject to}
& & \sum_{t=1}^N c_{t}\mathbb{e}_{t} \in {\{ 0,1 \}}^{N},
\end{aligned}
\end{equation}
where the optimization variables $c_t \in \mathbb{R},$ $t\in[N]$.

\subsection{A relaxed version of the problem}

The constraint in \cref{SpectralProblem} is difficult to deal with. To find a solution to our optimization problem, we use the following relaxed version of the problem. Consider a new constraint $\sum_{t=1}^N c_{t}^2 \leq N$, which defines a hyper-spherical set that completely covers the feasible set of the original problem in \cref{SpectralProblem}. Therefore, we will solve
\begin{equation}
\begin{aligned}\label{BinaryRelaxation}
& \underset{c_1,c_2,\dots,c_N}{\mathrm{maximize}}
& & \frac{1}{2}K\theta + \sum_{t=1}^{N} \Big(\frac{\lambda_t}{2K}-\frac{K\theta}{N}\Big)c_{t}^2 \\
& \text{subject to}
& & \sum_{t=1}^N c_{t}^2 \leq N.
\end{aligned}
\end{equation}

\begin{theorem}
Let $\mathcal{S}_{\mathcal{G}}$ denote the set of possible Nash equilibria for the network coordination game over a regular connected graph. Then,
\begin{equation}
\mathcal{S}^\star_{\mathcal{G}}= \{\mathbb{0}_N,\mathbb{1}_N \}.
\end{equation}
\end{theorem}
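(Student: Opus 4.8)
The plan is to exploit the result of Monderer and Shapley, invoked just above, that the Nash equilibria coincide with the maximizers of $\Phi$, and then to solve the relaxed problem in \cref{BinaryRelaxation} rather than the combinatorial problem in \cref{OriginalProblem} directly. The strategy is a relaxation-tightness argument: I would maximize the spectral objective over the ball $\sum_t c_t^2 \le N$, show that the resulting maximizer is already a binary vector lying in $\{\mathbb{0}_N,\mathbb{1}_N\}$, and conclude that since the ball contains $\{0,1\}^N$, this relaxed optimum is also the optimum of the original problem. Because a relaxed maximum is an upper bound for the constrained maximum and here is attained at a feasible binary point, the relaxation is tight, and every maximizer over $\{0,1\}^N$ must also maximize the relaxed objective.

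Write $\beta_t \Equaldef \lambda_t/(2K) - K\theta/N$, so the objective reads $\frac{1}{2}K\theta + \sum_{t=1}^N \beta_t c_t^2$. The two structural ingredients I would use are: by facts (a)--(b), $\lambda_1 = K$ is the largest eigenvalue with normalized eigenvector $\mathbb{e}_1 = \mathbb{1}_N/\sqrt{N}$; and by fact (d), connectedness makes $\lambda_1$ simple, so $\lambda_t < K$ and hence $\beta_t < \beta_1$ \emph{strictly} for every $t \ge 2$. Maximizing a weighted sum of squares over a Euclidean ball concentrates the entire budget on the coordinate carrying the strictly largest weight, which here is $t=1$.

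I would then split into three cases according to the sign of $\beta_1$, equivalently according to whether $\theta$ lies below, at, or above the threshold $N/(2K)$. If $\beta_1>0$, the relaxed maximizer is $c = \pm\sqrt{N}\,\mathbb{e}_1$, so $a = \pm\mathbb{1}_N$, and the only binary choice is $\mathbb{1}_N$. If $\beta_1<0$, then every $\beta_t<0$ and the objective is maximized uniquely at $c=0$, i.e. $a=\mathbb{0}_N$. If $\beta_1=0$, then $\beta_t<0$ for $t\ge 2$ and the maximizers are exactly the profiles with $c_t=0$ for $t\ge 2$, i.e. those proportional to $\mathbb{1}_N$; among binary vectors these are precisely $\mathbb{0}_N$ and $\mathbb{1}_N$. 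In every case a binary maximizer lies in $\{\mathbb{0}_N,\mathbb{1}_N\}$, so the relaxation is tight.

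The crux, and the step I expect to be most delicate, is uniqueness: ruling out any other binary profile as a maximizer. This is exactly where the strict gap $\beta_1>\beta_t$ coming from connectedness is essential. Any maximizer of the relaxed objective must place zero weight on every coordinate with $\beta_t<\beta_1$; since that is all of $t\ge 2$, a maximizer satisfies $c_t=0$ for $t\ge 2$ and is therefore proportional to $\mathbb{e}_1\propto\mathbb{1}_N$. The only binary vectors proportional to $\mathbb{1}_N$ are $\mathbb{0}_N$ and $\mathbb{1}_N$, which closes the characterization $\mathcal{S}^\star_{\mathcal{G}}=\{\mathbb{0}_N,\mathbb{1}_N\}$. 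The one subtlety requiring care is that for bipartite graphs $-K$ may be an eigenvalue; but since $-K<K=\lambda_1$, the corresponding weight is still strictly below $\beta_1$, so this does not affect the conclusion.
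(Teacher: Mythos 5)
Your proposal is correct and follows essentially the same route as the paper: spectral decomposition of the quadratic form, relaxation to the Euclidean ball $\sum_{t} c_t^2 \leq N$, a three-way case split on the sign of $\lambda_1/(2K) - K\theta/N$, and tightness of the relaxation because the relaxed optimum is attained at a feasible binary point. If anything, your handling of uniqueness (every binary maximizer must also maximize the relaxed objective, hence must be proportional to $\mathbb{1}_N$ by the strict spectral gap from connectedness) is slightly more explicit than the paper's own argument.
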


\vspace{5pt}

\begin{proof}
The proof corresponds to showing that the possible solutions to the optimization problem in \cref{SpectralProblem} are $\mathbb{1}_N$ and $\mathbb{0}_N$. Let us solve the relaxed problem in \cref{BinaryRelaxation}. Notice that the objective function is continuous over a compact feasible set. 
Therefore, it always admits a solution. Intuitively, the problem consists of assigning a limited energy budget to each component $c_1,c_2,\dots,c_N$.

Recall that $\lambda_1 > \lambda_2$ for connected graphs. In particular, for a regular graph of degree $K$, $\lambda_1=K$. Therefore, we need to consider three cases:

\begin{itemize}

\item Case 1: Let $\theta$ be such that
\begin{equation}
\frac{\lambda_1}{2K} - \frac{K\theta}{N} < 0 \Rightarrow \frac{\lambda_t}{2K} - \frac{K\theta}{N} < 0, \ \ t\in[N].
\end{equation}
Therefore, 
\begin{equation}
c_t^\star = 0, \ \ t\in[N].
\end{equation}

\item Case 2: Let $\theta$ be such that
\begin{equation}
\frac{\lambda_1}{2K} - \frac{K\theta}{N} > 0.
\end{equation}
Since 
\begin{equation}
\frac{\lambda_1}{2K} - \frac{K\theta}{N} > \frac{\lambda_t}{2K} - \frac{K\theta}{N}, \ \ t\in[N]\backslash\{1\},
\end{equation}
the optimal allocation places the entire energy budget in $c_1$. That is,
\begin{equation}
c_t^\star = \begin{cases}
\pm  \sqrt{N}, & \ \ t=1 \\
0, & \textup{otherwise.}
\end{cases}
\end{equation}

\item Case 3: Let $\theta$ be such that
\begin{equation}
\frac{\lambda_1}{2K} - \frac{K\theta}{N} = 0.
\end{equation}
Then, 
\begin{equation}
\frac{\lambda_t}{2K} - \frac{K\theta}{N} < 0, \ \ t\in[N]\backslash\{1\}.
\end{equation}
Therefore, 
\begin{equation}
c_t^\star = \begin{cases}
 \zeta \in \big[-\sqrt{N},+\sqrt{N}\big] & \ \ t=1 \\
0, & \textup{otherwise.}
\end{cases}
\end{equation}
\end{itemize}

When $c_1,c_2,\dots,c_N$ takes the value $(\pm \sqrt{N},0,\dots,0)^{\mathsf{T}}$ or $\mathbb{0}_N$ in ${\mathbb{R}^N}$, it corresponds to $a= \mathbb{1}_N$ and  $\mathbb{0}_N$, respectively. 
In Case 3, notice that the optimal solution is a continuum. However, only when $\zeta \in\{\pm\sqrt{N},0\},$ we have a corresponding solution in the original feasible set, and these points map to $a= \mathbb{1}_N$ and  $\mathbb{0}_N$, respectively. Since $\mathbb{1}_N$ and $\mathbb{0}_N$ are contained in the feasible set of problem in \cref{OriginalProblem}, the relaxation in \cref{BinaryRelaxation} is exact. Therefore, there is a unique $a^\star \in \{ \mathbb{1}_N, \mathbb{0}_N\}$ that solves  \cref{OriginalProblem} if $\theta \neq \frac{N}{2K}$, with a corresponding optimal values of $(N-K\theta)/2$, and $K\theta/2$, respectively.
\end{proof}

%Here we make three important observations :\\
%(a). Problem \eqref{BinaryRelaxation} is always solvable.  It always preserves a maximizer as $(\sqrt{N},0,0,\dots,0)^{\mathbf{T}}$ or $\mathbf{0} $ in ${\mathbb{R}^N}$, depending on the positivity of $(\frac{\lambda_1}{2k}-\frac{k\theta}{N})$.\\
%(b). The uniqueness of the maximizer can be guaranteed if $\lambda_1$ is strictly greater than any other eigenvalues. This is true since graph $G$ is connected.\\
%(c). When $c_1,c_2,\dots,c_N$ takes the value $(\sqrt{N},0,0,\dots,0)^{\mathbf{T}}$ or $\mathbf{0} $ in ${\mathbb{R}^N}$, they correspond to $\mathbf{a}= \mathbf{1}$ or $\mathbf{a}= \mathbf{0}$. Since vector $\mathbf{1}$ or $\mathbf{0}$ are contained in the feasible set of problem \eqref{OriginalProblem}, this relaxation \eqref{BinaryRelaxation} is exact.\\
%\\
%In conclusion, it is shown that $\mathbf{a}= \mathbf{1}$ or $\mathbf{0}$ is the unique maximizer to problem \eqref{OriginalProblem}. The corresponding potential value is $\frac{1}{2} (N-k\theta)$ or $\frac{1}{2}k\theta$.

\section{Trade-off between rationality and connectivity}

\subsection{Log-Linear Learning}

We assume that all players in the network use a learning algorithm known as \textit{Log-Linear Learning} (LLL)~\cite{blume1993statistical,marden2012revisiting}. LLL is a widely used learning algorithm where each agent updates its action at time $t$ based on its payoff given the actions played by its neighbors at time $t-1$. At each time step {$t>0$}, one agent {$i\in [N]$} is chosen uniformly at random, and allowed to update its current action. All other agents repeat the action taken at the previous time-step. The probability that agent {$i$} chooses action {$a_{i}\in\{0,1\}$} is determined as follows \cite{blume1993statistical}:
\begin{equation}\label{eq:LLL}
    \Pr\big(A_i(t)=a_{i}\big)= \frac{e^{\beta U_{i}\big(a_{i},a_{-i}(t-1)\big)}}{\sum_{a'_i\in \mathcal{A}_{i}}e^{\beta U_{i}\big(a'_{i},a_{-i}(t-1)\big)}}, \  a_i\in \mathcal{A}_i.
\end{equation}

If the game is an \textit{exact potential game}, %\cref{exact_potential}, 
then the Markov chain induced by LLL has a unique stationary distribution $\mu: \{0,1\}^N \rightarrow [0,1]$ given by
\begin{equation}
\mu(a \mid \beta)=
\frac{e^{\beta\Phi(a)}}{\sum_{a'\in \{ 0,1\}^N}e^{\beta \Phi(a')}},
\end{equation}
where $\Phi: \{0,1\}^N \rightarrow \mathbb{R}$ is the potential function.\par
\vspace{5pt}
We are interested in the minimum value of $\beta$ such that all agents coordinate at the optimal action profile of the game with high probability. Consider the following definition, given a relatively small $\delta \in (0,1)$:
\begin{equation}\label{Betamin}
\beta^{\min}(\delta) \Equaldef \min \Big\{ \beta \mid \mu(a^\star \mid \beta) \geq 1-\delta \Big\}.
\end{equation}
where $a^\star$ is the optimal action profile, i.e., the unique maximizer of $\Phi$. \par
\vspace{5pt}

Before discussing the interplay between rationality parameter $\beta$ in LLL and connectivity $K$, we need to address the fact that there may exist multiple regular graphs with the same connectivity degree. These graphs differ from each other up to a similarity transformation on columns (or rows by the fact that the adjacency matrix is symmetric) of their adjacency matrix. Applying the similarity transformation are equivalent to re-assigning indices to agents. Although the single potential value can be different on two isomorphic graphs $\mathcal{G}^1$ and $\mathcal{G}^2$ with the same $K$ and $N$, there exists a unique $\tilde{a}' \in \{0,1\}^N$ such that $\Phi_{\mathcal{G}^1}(a') = \Phi_{\mathcal{G}^2}(\tilde{a}')$. Such $\tilde{a}'$ can be derived by applying the same similarity transformation on $a'$. Therefore, when computing the exact value of $\mu(a'\mid\beta)$ for a specific $a' \neq a^\star$, we should specify and fix a graph $\mathcal{G}$. Nevertheless, $\Phi(a^\star)$ stays the same for any isomorphic graphs with the same degree and so does $\mu(a^\star \mid \beta)$. This is further discussed in the proof of our next theorem.\par
\vspace{5pt}
%Without loss of generality, we can always assume that the adjacency matrix $\mathbf{A}$ is a circulant matrix. i.e. each row of $\mathbf{A}$ can be obtained by performing an one-digit-shift on its previous row. The adjacency matrix of a regular graph can always be made circulant by re-assigning indices to nodes. If $\mathbf{A}$ is a circulant matrix, then the matrix-vector product $\mathbf{A}a$ can be written as the vectorization of a sequence generated by circular convolution of the first row of $\mathbf{A}$ and ${a}$.
%The following lemma is useful when considering the norm of a sequence generated by N-circular convolution:
The following lemma from graph theory is useful when characterizing the existence of a regular graph.

\vspace{5pt}

\begin{lemma}\label{graph}
    A regular graph $\mathcal{G}_K$ with $N$ vertices of degree $K$ exists if and only if $K \in \{2,\dots,N-1\}$ and $NK$ is even.
\end{lemma}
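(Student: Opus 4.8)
The plan is to prove the two directions of the equivalence separately. The forward (necessity) direction is an immediate consequence of elementary counting, while the reverse (sufficiency) direction I would settle by an explicit construction, so that the burden of the proof falls entirely on producing a witness graph.

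For necessity, suppose a $K$-regular graph $\mathcal{G}_K$ on $N$ vertices exists. Since the graph is simple with no self-loops, each vertex is adjacent to at most the $N-1$ remaining vertices, so $K \leq N-1$; together with the standing assumption $K \geq 2$ (consistent with \cref{theorem_EP}) this gives $K \in \{2,\ldots,N-1\}$. For the parity condition I would invoke the handshaking lemma: summing degrees over all vertices counts each edge twice, so $\sum_{i\in[N]} |\mathcal{N}_i| = 2|\mathcal{E}|$. Regularity makes the left-hand side equal to $NK$, whence $NK = 2|\mathcal{E}|$ is even.

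For sufficiency, assume $K \in \{2,\ldots,N-1\}$ with $NK$ even, and exhibit a circulant graph. Identify the vertex set $[N]$ with $\mathbb{Z}_N = \{0,1,\ldots,N-1\}$ and build edges from a symmetric connection set. If $K$ is even, write $K=2m$ and join each vertex $i$ to $i\pm 1, i\pm 2, \ldots, i\pm m \pmod N$; since $K \leq N-1$ we have $m < N/2$, so these $2m$ neighbors are distinct from one another and from $i$, and every vertex has degree exactly $2m = K$. If $K$ is odd, then because $NK$ is even $N$ must be even; I would write $K = 2m+1$, keep the same $2m$ nearest neighbors, and adjoin the single antipodal edge joining $i$ to $i + N/2 \pmod N$. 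As $N$ is even this antipodal vertex is well defined and the antipodal pairing is an involution, so it contributes a perfect matching that raises every degree by exactly one, yielding degree $2m+1 = K$ everywhere. As a bonus, the $\pm 1$ edges already form a Hamiltonian cycle, so the circulant is connected, matching the paper's standing connectivity assumption.

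The main obstacle, and really the only place demanding care, is the odd-$K$ case. There one must use the parity hypothesis precisely to force $N$ even (so that $i + N/2$ is an integer vertex), and then verify that the antipodal matching is edge-disjoint from the nearest-neighbor edges, creating neither a self-loop nor a repeated edge; this is exactly where the bound $K \leq N-1$ enters, by keeping $m < N/2$ so that the index $N/2$ lies strictly beyond the nearest-neighbor range. Once regularity is confirmed vertex-by-vertex in both parity cases, the construction completes the equivalence.
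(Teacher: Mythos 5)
Your proof is correct. Note, however, that the paper supplies no proof of Lemma~\ref{graph} at all --- it is invoked as a known fact from graph theory --- so your argument is not so much an alternative route as the missing argument itself. Both halves are sound: the necessity direction is the handshaking lemma plus the simple-graph bound $K\leq N-1$ (and you are right to flag that $K\geq 2$ is a standing convention of the paper rather than a mathematical consequence, since $1$-regular graphs exist for even $N$); the sufficiency direction is the standard circulant (Harary-type) construction, and you correctly isolate the only delicate point, namely that for odd $K$ the parity hypothesis forces $N$ even so the antipodal offset $N/2$ is a well-defined involution, and that $K\leq N-1$ keeps $m<N/2$ so the antipodal matching is disjoint from the nearest-neighbor band. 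The observation that the $\pm 1$ offsets already yield a Hamiltonian cycle is a useful bonus, since the paper's setting requires connected regular graphs and the lemma as stated only asserts existence of a regular one.
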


\vspace{5pt}

\begin{corollary}\label{G_K}
    Let $\mathbf{A}_K$ be the adjacency matrix of a regular graph $\mathcal{G}_K$ of degree $K$. The following statements on $\mathcal{G}_{K+1}$ hold: 
    \begin{enumerate}[(a)]
\item Suppose $N$ is even. Then $\mathcal{G}_{K+1}$ always exists. Moreover, the adjacency matrix of the regular graph $\mathcal{G}_{K+1}$ can be formulated in the following sense: there exist two permutation matrices $\mathbf{P}_1$ and $\mathbf{P}_2$ such that, $\mathbf{A}_{K+1} = \mathbf{P}_1(\mathbf{A}_{K}+\mathbf{P}_2)$. 

 \item Suppose $N$ is odd. Then $\mathcal{G}_{K+1}$ does not exist. However, $\mathcal{G}_{K+2}$ exists, and its adjacency matrix is given by $\mathbf{A}_{K+2} = \mathbf{P}_3(\mathbf{A}_{K}+\mathbf{P}_4+\mathbf{P}_5)$ for some permutation matrices $\mathbf{P}_3$, $\mathbf{P}_4$ and $\mathbf{P}_5$.

 \end{enumerate}
 
\end{corollary}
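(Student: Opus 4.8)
The plan is to prove both parts by pairing the parity criterion of \cref{graph} (which settles existence and nonexistence) with an explicit factor-addition construction (which produces the matrix decomposition). Throughout I would keep the degrees in the admissible range, i.e.\ $K \le N-2$ in part (a) and $K \le N-3$ in part (b), so that the target graphs are not forced to have degree $N$. The unifying idea is that raising the common degree of a regular graph by one amounts to superimposing a $1$-factor (a perfect matching), while raising it by two amounts to superimposing a $2$-factor (a spanning union of cycles); the permutation matrices $\mathbf{P}_i$ are precisely the algebraic encoding of these factors.

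For part (a), since $N$ is even the product $N(K+1)$ is automatically even, so \cref{graph} guarantees that $\mathcal{G}_{K+1}$ exists whenever $K+1 \le N-1$. For the decomposition I would exhibit a perfect matching $M$ on $[N]$ whose edges avoid those of $\mathcal{G}_K$, i.e.\ a $1$-factor sitting inside the complement $\bar{\mathcal{G}}_K$ (this is the delicate step, discussed below). Its adjacency matrix is a symmetric, fixed-point-free permutation matrix $\mathbf{P}_2$, and by edge-disjointness $\mathbf{A}_K+\mathbf{P}_2$ is again a symmetric $0/1$ matrix with zero diagonal and constant row sum $K+1$, hence the adjacency matrix of a $(K+1)$-regular graph. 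Taking $\mathbf{P}_1=\mathbf{I}$ then gives $\mathbf{A}_{K+1}=\mathbf{P}_1(\mathbf{A}_K+\mathbf{P}_2)$, with $\mathbf{P}_1$ reserved to absorb a vertex relabeling when needed.

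For part (b), the nonexistence of $\mathcal{G}_{K+1}$ is immediate from parity: with $N$ odd, the hypothesis that $\mathcal{G}_K$ exists forces $K$ even (so that $NK$ is even), whence $N(K+1)$ is odd and \cref{graph} rules out $\mathcal{G}_{K+1}$. Since $K+2$ is then even, $N(K+2)$ is even and $\mathcal{G}_{K+2}$ exists for $K+2 \le N-1$. Because $N$ is odd there is no perfect matching to add, so instead I would add a $2$-factor $F$ (a disjoint union of cycles covering all $N$ vertices, which exists for $N\ge 3$) edge-disjoint from $\mathcal{G}_K$. Orienting each cycle of $F$ consistently and letting $\mathbf{P}_4$ be the successor permutation, the predecessor permutation is $\mathbf{P}_5=\mathbf{P}_4^{\mathsf{T}}$, and the adjacency matrix of $F$ is the symmetric sum $\mathbf{P}_4+\mathbf{P}_5$. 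Thus $\mathbf{A}_K+\mathbf{P}_4+\mathbf{P}_5$ is the adjacency matrix of a $(K+2)$-regular graph, and with a relabeling $\mathbf{P}_3$ we obtain $\mathbf{A}_{K+2}=\mathbf{P}_3(\mathbf{A}_K+\mathbf{P}_4+\mathbf{P}_5)$.

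The hard part will be the edge-disjointness step, namely guaranteeing that the required factor actually fits into the complement $\bar{\mathcal{G}}_K$, which is $(N-1-K)$-regular. This is not automatic from the degree alone: for example $\mathcal{G}_K=K_{3,3}$ has complement equal to two disjoint triangles, which contains no perfect matching, so no $1$-factor can be superimposed on that particular labeling even though a $4$-regular graph on six vertices certainly exists. The outer permutations $\mathbf{P}_1$ (resp.\ $\mathbf{P}_3$) are exactly what provide the extra freedom to realize the degree increase as a relabeled superposition rather than a literal one, and making this precise is where the real work lies. I would discharge it by invoking a factorization/edge-coloring result that yields the needed $1$- or $2$-factor in the relevant auxiliary graph and by treating the few low-degree configurations whose complement lacks such a factor as separate cases; by contrast, the parity bookkeeping and the permutation-matrix decomposition of a matching or a $2$-factor are entirely routine.
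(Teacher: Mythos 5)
The existence and nonexistence claims you derive from \cref{graph} are fine, and your part (b) plan can actually be completed: since $N$ odd forces $K$ even, the complement $\bar{\mathcal{G}}_K$ is regular of even degree $N-1-K\geq 2$, so Petersen's $2$-factorization theorem (applied component-wise via Eulerian circuits) supplies a spanning $2$-factor edge-disjoint from $\mathcal{G}_K$, and writing it as $\mathbf{P}_4+\mathbf{P}_4^{\mathsf{T}}$ with $\mathbf{P}_3=\mathbf{I}$ gives the stated decomposition. The genuine gap is precisely the step you flag in part (a) and then defer. You correctly observe that $\bar{\mathcal{G}}_K$ need not contain a perfect matching, with $K_{3,3}$ as the witness, but you then assert that the outer permutation $\mathbf{P}_1$ provides enough freedom to repair this by relabeling. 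It does not, and the deferred step cannot be discharged by any factorization theorem or case analysis, because the decomposition claim is false for $\mathcal{G}_K=K_{3,3}$ ($N=6$, $K=3$). Indeed, the only $4$-regular graph on $6$ vertices is $J-I-M$ for a perfect matching $M$ (its complement is $1$-regular), and since $\mathbf{A}_{K_{3,3}}=J-I-\mathbf{A}_{2K_3}$, where $2K_3$ denotes the two triangles on $\{1,2,3\}$ and $\{4,5,6\}$, the identity $\mathbf{A}_{K+1}=\mathbf{P}_1(\mathbf{A}_K+\mathbf{P}_2)$ rearranges to
\begin{equation*}
\mathbf{P}_1^{-1}(I+M)+\mathbf{P}_2 \;=\; I+\mathbf{A}_{2K_3} \;=\; J_3\oplus J_3 .
\end{equation*}
All three matrices on the left are $0/1$, so their supports are disjoint and contained in the block-diagonal support of $J_3\oplus J_3$. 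Column $j$ of $\mathbf{P}_1^{-1}(I+M)$ has its two ones in rows $\sigma^{-1}(j)$ and $\sigma^{-1}(M(j))$, where $\sigma$ is the permutation underlying $\mathbf{P}_1$; containment in the blocks first forces $\sigma^{-1}$ to preserve the partition $\{1,2,3\}\cup\{4,5,6\}$ and then forces $M(j)$ to lie in the same part as $j$, i.e., $M$ restricts to a fixed-point-free involution of a $3$-element set, which is impossible. Hence no choice of $\mathbf{P}_1,\mathbf{P}_2$ works.

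Two further remarks. First, the paper itself states \cref{G_K} with no proof at all, so there is nothing on the paper's side for your argument to diverge from; the defect you ran into is in the statement, which at best holds only for those $\mathcal{G}_K$ whose complement contains a perfect matching (your construction with $\mathbf{P}_1=\mathbf{I}$ then does prove it), and any correct version needs that hypothesis or a reformulation in terms of isomorphism classes rather than a matrix identity. Second, this is not a cosmetic issue: the proof of \cref{interplay} consumes the identity $\mathbf{A}_{K+1}=\mathbf{P}_1(\mathbf{A}_K+\mathbf{P}_2)$ verbatim, so the gap you have exposed propagates there. You should also record explicitly the restrictions $K\leq N-2$ in (a) and $K\leq N-3$ in (b) that you impose, since the corollary as printed omits them and fails trivially at $K=N-1$.
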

\vspace{5pt}
The next step is to evaluate $a'^{\mathsf{T}}\mathbf{A} a'$ for each $a' \in \{0,1\}^N$. Consider the following bound for the quadratic form $a'^{\mathsf{T}}\mathbf{A} a'$.

\vspace{5pt}

\begin{lemma}\label{Lemma}
%Let $\{\mathbf{x}[n]\}_{n=1}^{N}$ and $\{\mathbf{y}[n]\}_{n=1}^{N}$ be two sequences of length $N$ in $\ell^1$. Let $\| \cdot \|_T$ be a permutation invariant norm on $\ell^1$, that is, let $\mathbf{P} \in \mathbb{R}^{N \times N} $ be a permutation matrix and $x \in \mathbb{R}^N$ be the vectorization of sequence $\{\mathbf{x}[n]\}_{n=1}^{N}$, then $\|\mathbf{P}x\|_T=\|x\|_T$. We have the following inequality:
Let $\mathbf{A}$ be the adjacency matrix of a regular graph of degree $K$. Let $a' \in \mathbb{R}^N$ be a binary vector with $\|a\|_1 = m$. The following inequality holds: 
\begin{equation}\label{Norm}
a'^{\mathsf{T}}\mathbf{A} a' \leq mK , \ \ a' \in \{0,1\}^N
\end{equation}
\end{lemma}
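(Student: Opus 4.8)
The plan is to bound the quadratic form by its spectral maximum and then exploit that $a'$ is binary. Since $\mathbf{A}$ is real and symmetric, its eigenvectors $\{\mathbb{e}_t\}_{t=1}^N$ form an orthonormal basis (fact (c)), so I would expand $a' = \sum_{t=1}^N c_t \mathbb{e}_t$ exactly as in the derivation leading to \cref{QF,LF}. That expansion already supplies the two identities I need, namely $a'^{\mathsf{T}}\mathbf{A} a' = \sum_{t=1}^N \lambda_t c_t^2$ and $\|a'\|_2^2 = \sum_{t=1}^N c_t^2$, so the whole argument reduces to a termwise comparison of these two sums.

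First I would invoke fact (b), which states that the largest eigenvalue in magnitude of $\mathbf{A}$ is $\lambda_1 = K$; hence $\lambda_t \le |\lambda_t| \le K$ for every $t \in [N]$. Since each $c_t^2 \ge 0$, this yields the termwise bound $\sum_{t=1}^N \lambda_t c_t^2 \le K \sum_{t=1}^N c_t^2$. Second, because $a'$ is a binary vector, I would use the identity established in \cref{LF}, namely $\|a'\|_2^2 = \|a'\|_1 = \sum_{i\in[N]} a'_i = m$, which forces $\sum_{t=1}^N c_t^2 = m$. Combining the two displays gives $a'^{\mathsf{T}}\mathbf{A} a' \le K m$, which is exactly \cref{Norm}.

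An equivalent and more combinatorial route, which I would mention for intuition, observes that $a'^{\mathsf{T}}\mathbf{A} a' = \sum_{i\in[N]}\sum_{j\in[N]} a'_i A_{ij} a'_j$ counts twice the number of edges of $\mathcal{G}$ whose endpoints both lie in the active set $S \Equaldef \{i \mid a'_i = 1\}$, with $|S| = m$. Each vertex of $S$ has degree exactly $K$ in the $K$-regular graph, so its degree inside the subgraph induced on $S$ is at most $K$; summing these induced degrees gives $2e(S) \le mK$, i.e. $a'^{\mathsf{T}}\mathbf{A} a' \le mK$. This reproves the claim and also clarifies when equality holds, namely precisely when $S$ induces a $K$-regular subgraph of $\mathcal{G}$.

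I do not anticipate a genuine obstacle here, as the statement is a one-line spectral estimate once the right representation is chosen. The only point requiring a moment of care is that fact (b) bounds the eigenvalues in \emph{magnitude}; I would note explicitly that we only need the upper inequality $\lambda_t \le K$, which follows from $\lambda_t \le |\lambda_t| \le K$, after which every step is routine and no appeal to connectivity or to fact (d) is needed.
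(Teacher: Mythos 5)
Your proposal is correct. Your primary argument and the paper's proof are the same spectral estimate in substance --- both hinge on $\lambda_{\max}(\mathbf{A})=K$ together with the binary-vector identity $\|a'\|_2^2=\|a'\|_1=m$ --- but the mechanics differ: the paper bounds the quadratic form via the induced operator norm and Cauchy--Schwarz (which it calls H\"older), writing $a'^{\mathsf{T}}\mathbf{A}a' \leq \|a'\|_2\,\|\mathbf{A}\|_2\,\|a'\|_2 = mK$, whereas you expand $a'$ in the orthonormal eigenbasis and bound $\sum_t \lambda_t c_t^2 \leq K\sum_t c_t^2$ termwise, reusing the identities already derived in Section~V. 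These buy essentially the same thing, and your explicit remark that only the one-sided bound $\lambda_t \leq |\lambda_t| \leq K$ is needed is a point the paper glosses over. Your secondary, combinatorial argument is genuinely different and more elementary: identifying $a'^{\mathsf{T}}\mathbf{A}a'$ with twice the number of edges induced on the active set $S$ and bounding each induced degree by $K$ gives $2e(S)\leq mK$ with no spectral input at all, and it additionally characterizes equality (it holds exactly when $S$ induces a $K$-regular subgraph, e.g.\ $S=\emptyset$ or $S=[N]$), which neither the paper's proof nor your spectral route makes visible. No gaps in either argument.
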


\vspace{5pt}

\begin{proof}
Let $\|\mathbf{A}\|_2$ denote the $\ell^2$ induced operator norm of $\mathbf{A}$, that is, $\|\mathbf{A}\|_2 = \sup_{x \neq \mathbb{0}_N}{\frac{\|\mathbf{A}x\|_2}{\|x\|_2}}$. $\|\mathbf{A}\|_2$ is known to be the largest singular value of $\mathbf{A}$, in our case, $K$. As any operator norm is consistent with the vector norm inducing it, this gives us, for all $a' \in \{0,1\}^N$,
\begin{equation}
    \|\mathbf{A}a'\|_2 \leq \|\mathbf{A}\|_2\|a'\|_2
\end{equation}
Using H\"{o}lder's inequality on $a'^{\mathsf{T}}\mathbf{A} a'$, we obtain:
\begin{equation}
    a'^{\mathsf{T}}\mathbf{A} a' \leq \|a'\|_2  \|\mathbf{A}a'\|_2 \leq \|a'\|_2 \|\mathbf{A}\|_2\|a'\|_2 = mK.
\end{equation}
%Let $x[k]$ and $y[k]$ denote the $k$-th component of sequences $\mathbf{x}$ and $\mathbf{y}$ respectively. For simplicity we define $x[-k] = x[N-k]$. Let $\delta[n]$ be the Discrete Dirac Impulse function with respect to $n$. We have:
%\begin{equation}
%\begin{aligned}
    %\|\mathbf{x} \circledast \mathbf{y}\|_T 
   % &\Equaldef \|\sum_{j=1}^{N}\sum_{k=1}^{N} x[k] y[j-k+1] \delta[n-j]\|_T \\
   % &\leq \sum_{k=1}^{N} \|\sum_{j=1}^{N} x[k] y[j-k+1] \delta[n-j]\|_T \\
    %& = \sum_{k=1}^{N} |x[k]| \|\sum_{j=1}^{N} y[j-k+1] \delta[n-j]\|_T \\
    %& = \|\mathbf{x}\|_1 \|\mathbf{y}\|_T
%\end{aligned}
%\end{equation}
%We need to stress that we are able to switch the order of summations by the fact that the overall result is always finite. 
\end{proof}
%\begin{corollary}\label{pnorms}
    %It's obvious that $\ell^p$ norms are permutation invariant for all $p \in \{1,2,\dots,\infty\}$. We have
    %\begin{equation}\label{Infinity Norm bound}
    %\begin{aligned}
     %   &\|Aa\|_{p} \leq \|a\|_{p}\|x\|_{1}      \\
       % &\|Aa\|_{p} \leq \|a\|_{1}\|x\|_{p}      \\
    %\end{aligned}
   % \end{equation}
    %where $x \in \mathbb{R}^N$ is the first row of circulant adjacency matrix $\mathbf{A}$. Suppose $a$ is a binary vector that contains $m$ 1's and $(N-m)$ 0's, inequality \eqref{Infinity Norm bound} implies that the magnitude of the largest component in vector $(\mathbf{A}a)$ can not exceed $\min\{K,m\}$ by taking $p = \infty$.
%\end{corollary}
%\begin{corollary}\label{l2 Inequality}
%Suppose $a \in \{0,1\}^N$ is an action profile. The the value $a^\mathsf{T}\mathbf{A}a$ is upper bounded by:
%\begin{equation}\label{QuadraticFormBound}
   % a^\mathsf{T}\mathbf{A}a \leq mK
%\end{equation}
   % where m is the number of 1's in $a$. Note that $\|x\|_1$ is exactly the degree of graph $\mathcal{G}$, inequality \eqref{QuadraticFormBound} can be shown by considering $a^\mathsf{T}\mathbf{A}a \leq \|a\|_2\|\mathbf{A}a\|_2$ and applying the first inequality in (\ref{Infinity Norm bound}).
%\end{corollary}
\vspace{5pt}

An interesting interpretation of Lemma \ref{Lemma} is that $a'^{\mathsf{T}}\mathbf{A}a'$ can be seen as an inner product of $a'$ and $\mathbf{A}a'$. This value is obtained by sampling the sequence $\{(\mathbf{A}a')[n]\}_{n=1}^{N}$. The sampling rule is defined by the action profile $a$, where we keep the $i$-th value if $a'_i = 1$ and discard it if $a'_i = 0$. As suggested by the largest eigenvalue of $\mathbf{A}$, the magnitude of the largest component in sequence $\{(\mathbf{A}a')[n]\}_{n=1}^{N}$ can not exceed $K$, and there are at most $m$ such samples, $mK$ becomes a natural upper bound for $a'^{\mathsf{T}}\mathbf{A}a'$. In fact, the insight of this interpretation is captured by the permutation invariance of $\ell^p$ norms and the definition of $\|\mathbf{A}\|_2$.

From this point on, for simplicity, we  ignore the constant term in our potential function $\Phi$, as potential functions of the same coordination game differ from each other by a constant. We use the following expression instead:
\begin{equation}\label{PotentialFunction}
 \hat{\Phi}(a) = -\frac{K\theta}{N}\sum_{i\in [N]}a_{i} + \frac{1}{2K}a^{\mathsf{T}}\mathbf{A} a.
\end{equation}

\begin{theorem}\label{interplay}
Consider a potential game in which the agents use LLL with rationality parameter $\beta\in\mathbb{R}_{\geq 0}$. Suppose all agents are distributed on a regular graph of connectivity $K$. The probability that all agents learn to play the optimal action profile is defined as
\begin{equation}
g(\beta,K)\Equaldef \mu_K(a^\star \mid \beta),
\end{equation} where $a^{\star}$ is a  maximizer of $\Phi$. The function $g$ is strictly increasing in $\beta$. Moreover, $g$ is monotone in $K$ for $K > N/(2\theta)$:
\begin{enumerate}[($i$)]
\item $g(\beta, K) < g(\beta, K+1)$ for even $N$; 
\item $g(\beta, K) < g(\beta, K+2)$ for odd $N$.
\end{enumerate}
\end{theorem}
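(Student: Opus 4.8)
The plan is to treat the two monotonicity claims separately, since only the dependence on $K$ requires graph-theoretic input. Throughout I work with the reduced potential $\hat{\Phi}$ from \cref{PotentialFunction}, which is legitimate because the additive constant dropped there cancels in the ratio defining $\mu$.

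\textbf{Monotonicity in $\beta$.} I would write $g(\beta,K)=\mu_K(a^\star\mid\beta)=\bigl(1+\sum_{a'\neq a^\star}e^{\beta(\hat{\Phi}(a')-\hat{\Phi}(a^\star))}\bigr)^{-1}$. Since $a^\star$ is the unique maximizer of $\Phi$ (the preceding theorem gives $a^\star\in\{\mathbb{0}_N,\mathbb{1}_N\}$), every exponent $\hat{\Phi}(a')-\hat{\Phi}(a^\star)$ is strictly negative, so each summand strictly decreases in $\beta$; the denominator strictly decreases and $g$ strictly increases. Equivalently, the Gibbs identity $\partial_\beta g=g\bigl(\hat{\Phi}(a^\star)-\mathbb{E}_\mu[\hat{\Phi}]\bigr)$ combined with $\mathbb{E}_\mu[\hat{\Phi}]<\hat{\Phi}(a^\star)$ for finite $\beta$ gives the claim directly.

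\textbf{Reduction for the $K$-dependence.} First I would pin down $a^\star$ in the stated regime. Evaluating the reduced potential at the two candidates gives $\hat{\Phi}(\mathbb{1}_N)=N/2-K\theta$ and $\hat{\Phi}(\mathbb{0}_N)=0$, so $K>N/(2\theta)$ is exactly the condition under which $\mathbb{0}_N$ is the maximizer. Then $\hat{\Phi}(a^\star)=0$ and the success probability collapses to $g(\beta,K)=1/Z_K(\beta)$ with $Z_K(\beta)=\sum_{a'\in\{0,1\}^N}e^{\beta\hat{\Phi}_K(a')}$, so claims ($i$)--($ii$) become $Z_{K+1}(\beta)<Z_K(\beta)$ for even $N$ and $Z_{K+2}(\beta)<Z_K(\beta)$ for odd $N$. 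Since the sum defining $Z$ ranges over all binary vectors, it is invariant under relabeling the vertices; this lets me replace the graphs by the representatives furnished by \cref{G_K}, namely $\mathbf{A}_{K+1}=\mathbf{A}_K+\mathbf{M}$ with $\mathbf{M}$ the adjacency of a perfect matching (even $N$), and $\mathbf{A}_{K+2}=\mathbf{A}_K+\mathbf{C}$ with $\mathbf{C}$ a $2$-regular increment, i.e.\ a union of cycles (odd $N$).

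\textbf{Termwise domination.} The heart is to show $\hat{\Phi}_{K+1}(a')\le\hat{\Phi}_K(a')$ for every $a'$, strictly whenever $\|a'\|_1=m\ge1$. Abbreviating $Q=a'^{\mathsf{T}}\mathbf{A}_K a'\ge0$ and $R=a'^{\mathsf{T}}\mathbf{M}a'$, a short computation using the additive representative yields
\begin{equation}
\hat{\Phi}_{K+1}(a')-\hat{\Phi}_K(a') = -\frac{\theta}{N}m-\frac{Q}{2K(K+1)}+\frac{R}{2(K+1)}.
\end{equation}
The middle term is nonpositive, and since $\mathbf{M}$ is a matching one has $R\le m$, giving $\hat{\Phi}_{K+1}(a')-\hat{\Phi}_K(a')\le m\bigl(\tfrac{1}{2(K+1)}-\tfrac{\theta}{N}\bigr)$, strictly negative for $m\ge1$ precisely when $K+1>N/(2\theta)$, which is implied by $K>N/(2\theta)$. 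The odd-$N$ case is identical with $\mathbf{C}$ in place of $\mathbf{M}$, using $a'^{\mathsf{T}}\mathbf{C}a'\le 2m$ (the largest eigenvalue of a $2$-regular graph is $2$, so \cref{Lemma}/\cref{Norm} applies with $K=2$), and yields the threshold $K+2>N/(2\theta)$. Summing $e^{\beta\hat{\Phi}}$ over all $a'$, and noting that at least one $a'\neq\mathbb{0}_N$ contributes a strict inequality when $\beta>0$, gives $Z_{K+1}(\beta)<Z_K(\beta)$, hence $g(\beta,K)<g(\beta,K+1)$, and analogously for the odd case.

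\textbf{Main obstacle.} The delicate point is the $K$-comparison, where the quadratic form lives on different adjacency matrices: a naive termwise bound (applying \cref{Lemma} separately to $a'^{\mathsf{T}}\mathbf{A}_{K+1}a'$ and $a'^{\mathsf{T}}\mathbf{A}_K a'$) is too lossy and fails to close in the stated regime. The additive decomposition of \cref{G_K} is exactly what makes the shared $\mathbf{A}_K$-contribution telescope into the manifestly nonpositive term $-Q/(2K(K+1))$; securing this representative together with the companion $2$-factor decomposition for odd $N$, and justifying that $Z_K$ is unchanged under the relabeling used to pass to these representatives, is the crux. A secondary point to state carefully is that the strict inequalities in ($i$)--($ii$) require $\beta>0$ (and $\theta>0$, so that the regime $K>N/(2\theta)$ is meaningful), since at $\beta=0$ every profile is equally likely and $g(0,K)=2^{-N}$ independently of $K$.
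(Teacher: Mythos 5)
Your proposal is correct and follows essentially the same route as the paper: the same termwise/derivative argument for monotonicity in $\beta$, the same reduction to comparing partition functions once $a^\star=\mathbb{0}_N$ is identified in the regime $K>N/(2\theta)$, and the same use of the additive decomposition from \cref{G_K} (your perfect-matching increment $\mathbf{M}$ and $2$-regular increment $\mathbf{C}$ are exactly the paper's $\mathbf{P}_2$ and $\mathbf{P}_4+\mathbf{P}_5$, with the bounds $R\le m$ and $a'^{\mathsf{T}}\mathbf{C}a'\le 2m$ matching the paper's operator-norm estimates). Your write-up is if anything slightly more careful, making the threshold $K+1>N/(2\theta)$ and the caveats $\beta>0$, $\theta>0$ explicit where the paper leaves them implicit.
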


\vspace{5pt}

\begin{proof}
For monotonicity with respect to $\beta$, computing the derivative of $g$ with respect to $\beta$, we obtain the following equivalence: $\frac{\partial g}{\partial \beta} > 0$ if and only if
\begin{equation}\label{eq:derivative_numerator}
 \hat{\Phi}(a^\star)e^{\beta \hat{\Phi}(a^\star)} \! \! \! \! \! \! \!\sum_{a'\in\{0,1\}^N}e^{\beta\hat{\Phi}(a')} > e^{\beta \hat{\Phi}(a^\star)} \! \! \! \! \! \! \!  \sum_{a'\in\{0,1\}^N} \hat{\Phi}(a')e^{\beta\hat{\Phi}(a')}.
\end{equation}
Since $e^{\beta\hat{\Phi}(a^\star)}>0$, the condition in \cref{eq:derivative_numerator} becomes:
\begin{equation}\label{eq:derivative_numerator2}
\sum_{a'\in\{0,1\}^N} \big(\hat{\Phi}(a^\star) -  \hat{\Phi}(a')\big)e^{\beta\hat{\Phi}(a')} > 0.
\end{equation}
Since $a^\star$ is a maximizer of $\hat{\Phi}$, we have that 
\begin{equation}
\hat{\Phi}(a^\star) \geq \hat{\Phi}(a') , \ \ a' \in\{0,1\}^N.
\end{equation}
Moreover, since there is at least one $\tilde{a} \in \{0,1 \}^N$ such that $\Phi(a^\star) > \Phi(\tilde{a})$, we have that  \cref{eq:derivative_numerator2} holds and consequently, $\frac{\partial g}{\partial \beta}> 0$. The function $g(\beta,K)$ is continuous and strictly increasing in $\beta$, with $g(\beta,K) \rightarrow 1$, as $\beta\rightarrow \infty$.\\

 For monotonicity with respect to $K$, Let $\mathbf{A}_K$ be the adjacency matrix of a fixed regular graph of connectivity $K$. As shown in Corollary \ref{G_K} from Lemma \ref{graph}, we need to consider the parity of $N$.

 Suppose $N$ is even. Then regular graph $\mathcal{G}_{K+1}$ exists and we denote its adjacency matrix by $\mathbf{A}_{K+1}$. For some permutation matrices $\mathbf{P}_1$ and $\mathbf{P}_2$, we have $\mathbf{A}_{K+1} = \mathbf{P}_1(\mathbf{A}_{K}+\mathbf{P}_2)$. Denote $\tilde{a}' = \mathbf{P}_{1}^{-1}a'= \mathbf{P}_{1}^{\mathsf{T}}a'$
 We have that
\begin{equation}\label{Pinvariance}
\begin{aligned}
     a'{^\mathsf{T}}\mathbf{A}_{K+1}a' 
     & = a'{^\mathsf{T}}\mathbf{P}_1(\mathbf{A}_{K}+\mathbf{P}_2)a'\\
      & = a'{^\mathsf{T}}\mathbf{P}_1\mathbf{A}_{K}a' + a'{^\mathsf{T}}\mathbf{P}_1\mathbf{P}_2a'\\
      & = \tilde{a}'{^\mathsf{T}}\mathbf{A}\tilde{a}' + a'{^\mathsf{T}}\mathbf{P}_1\mathbf{P}_2a'\\  
     \end{aligned}
\end{equation}
Note that ($\mathbf{P}_1\mathbf{P}_2$) is again a permutation matrix, its $\ell^p$-induced operator norm $\|\mathbf{P}_1\mathbf{P}_2\|_p$ always equals to 1 since $\ell^p$ norms are permutation invariant. Using the same trick as in Lemma \ref{Lemma}, we have
\begin{equation}\label{Pbound}
\begin{aligned}
     a'^{\mathsf{T}} \mathbf{P}_1\mathbf{P}_2 a' &\leq \|a'\|_{\infty}  \|\mathbf{P}_1\mathbf{P}_2a'\|_1 \\
     &\leq \|a'\|_{\infty} \|\mathbf{P}_1\mathbf{P}_2\|_1\|a'\|_1 = m \\
\end{aligned}
\end{equation}
%\mmv{I think I understand your point now: even when the graphs are circulant, we may have two or more different graphs with the same parameters $N$ and $K$. The question is whether $\beta_{\min}$ is the same for both. I believe it is, but we need to prove it. Due to my brain fog this afternoon, I was missing the point. Now I get it. I will write a brief Matlab code to verify this for two circulant graphs $N=8$ $K=4$.}
Combining (\ref{Pinvariance}) and (\ref{Pbound}) yields that 
\begin{equation}
    a'{^\mathsf{T}}\mathbf{A}_{K+1}a' \leq  \tilde{a}'{^\mathsf{T}}\mathbf{A}_{K}\tilde{a}' + m
\end{equation}
Now let's consider the case $\theta > \frac{N}{2K}$ (the other case $\theta < 0$ will be similar and symmetric, thus it is ommited here). In this case, $a^\star = \mathbb{0}_N $ and $\hat{\Phi}(a^\star) = 0$. For any $a' \neq \mathbb{0}_N$, we have 
\begin{equation}
\begin{aligned}
    &\hat{\Phi}_K(\tilde{a}') = -\frac{K\theta m}{N} + \frac{1}{2K}\tilde{a}'^{\mathsf{T}}\mathbf{A_K} \tilde{a}' \\
    &\hat{\Phi}_{K+1}(a') \leq -\frac{(K+1)\theta m}{N} + \frac{1}{2(K+1)}(\tilde{a}'^{\mathsf{T}}\mathbf{A}_K \tilde{a}'+m)\\
\end{aligned} 
\end{equation}
Taking difference yields that $\hat{\Phi}_K(\tilde{a}')-\hat{\Phi}_{K+1}(a') > 0$ for any $a' \neq \mathbb{0}_N$ since $\theta > \frac{N}{2K}$. Recall that $\tilde{a}' = \mathbf{P}_{1}^{\mathsf{T}}a'$ and $\mathbf{P}_1$ is bijective on $\{0,1\}^N \rightarrow \{0,1\}^N$, we have
\begin{equation}
\begin{aligned}
    \sum_{a'\in \{ 0,1\}^N}e^{\beta \hat{\Phi}_{K}(a')} 
    & = \sum_{\tilde{a}' \in \{ 0,1\}^N}e^{\beta \hat{\Phi}_{K}(\tilde{a}')}\\
    & > \sum_{a'\in \{ 0,1\}^N}e^{\beta \hat{\Phi}_{K+1}(a')}
\end{aligned}
\end{equation}
This shows that $\mu_K(a^\star \mid \beta)$ and $\mu_{K+1}(a^\star \mid \beta)$ have the same numerator as $e^{\beta 0} = 1$, then $\mu_K(a^\star \mid \beta)$ has a larger denominator. Hence $\mu_K(a^\star \mid \beta) < \mu_{K+1}(a^\star \mid \beta)$.\par
Suppose $N$ is odd. Then $\mathcal{G}_{K+1}$ does not exist. The adjacency matrix of regular graph $\mathcal{G}_{K+2}$ is given by $\mathbf{A}_{K+2} = \mathbf{P}_3(\mathbf{A}_{K}+\mathbf{P}_4+\mathbf{P}_5)$. Using the similar inductive procedure as when $N$ is even and define $\hat{a}' = \mathbf{P}_{3}^{-1}a'$, it is evident that
\begin{equation}
    a'{^\mathsf{T}}\mathbf{A}_{K+2}a' \leq  \hat{a}'{^\mathsf{T}}\mathbf{A}_{K}\hat{a}' + 2m
\end{equation}
Consequently  $\hat{\Phi}_K(\hat{a}')-\hat{\Phi}_{K+2}(a') > 0$ for any $a' \neq \mathbb{0}_N $ and $\mu_K(a^\star \mid \beta) < \mu_{K+2}(a^\star \mid \beta)$.
This completes the proof. The function $g$ is strictly increasing in $K$ for $K > \frac{N}{2\theta}$.
\end{proof}
\vspace{5pt}

\begin{remark}
    Note that there exist values of $\theta$ such that we can not observe monotonicity in $K$ at all. This is due to the reason $K \in \{2,\dots,N-1\}$ has a restricted range of values, such that $K$ must stop growing larger before it hit the monotonicity threshold $\frac{N}{2 \theta}$. For  $K \leq \left \lfloor \frac{N}{2\theta} \right \rfloor -1 $, numerical results suggest that the monotonicity is reversed. However, we conjecture that LLL would still show increasing behavior with respect to $K$ in the sense of expected total reward, that is, $\mathbb{E}^{\mu_K} [ \hat{\Phi} ]$ is an increasing function of $K$, as $\hat{\Phi}(a)$ being a random variable defined on $\{0,1\}^N$ with distribution $\mu_K$. We leave this issue to be addressed in future work.
\end{remark}

\vspace{5pt}

Recall the definition of $\beta^{\min}(\delta)$ in \cref{Betamin}. Suppose the failure probability $\delta$ is fixed, then there exists a trade-off between the minimal rationality $\beta^{\min}(\delta)$ and connectivity $K$ since $\mu_K(a^\star \mid \beta)$ is increasing in both $\beta$ and sufficiently large $K$. For $\theta \notin (0,\frac{N}{2})$, a network of better connectivity would allow for a smaller $\beta^{\min}(\delta)$, in order to guarantee that Log-Linear Learning procedure hits the same probabilistic bound. 
%\vspace{5pt}

We proceed to estimate the value of $\beta_K^{\min}(\delta)$. First notice that:
\begin{equation}\label{mu_K}
\mu_K(a^\star \mid \beta) = \frac{e^{\beta\big(-\frac{K\theta}{N}\mathbb{1}_N^{\mathsf{T}}a^\star_{i} + \frac{1}{2K}a^{\star\mathsf{T}}\mathbf{A} a^\star \big)}}{\sum_{a'\in \{ 0,1\}^N}e^{\beta \big(-\frac{K\theta}{N}\mathbb{1}_N^\mathsf{T}a'_{i} + \frac{1}{2K}a'^{\mathsf{T}}\mathbf{A} a' \big) }}.
\end{equation}
Using the inequality from Lemma \ref{Lemma} and $\beta \in \mathbb{R}_{\geq 0}$, we have 
\begin{equation}
    e^{\beta \big(-\frac{K\theta}{N}\mathbb{1}_N^\mathsf{T}a'_{i} + \frac{1}{2K}a'^{\mathsf{T}}\mathbf{A} a' \big) } \leq e^{\beta \big(-\frac{K\theta}{N}m + \frac{mK}{2K} \big) }.
\end{equation}
Counting the number of binary vectors with m 1's in them, we have
\begin{equation}
    \sum_{a'\in \{ 0,1\}^N}\!\!\!\!\!\!\!\!e^{\beta(-\frac{K\theta}{N}\mathbb{1}_N^\mathsf{T}a'_{i} + \frac{1}{2K}a'^{\mathsf{T}}\mathbf{A} a') } \leq \sum_{m=0}^{N} {N \choose m}e^{\beta (-\frac{K\theta}{N}m + \frac{m}{2}) }.
\end{equation}
Multiplying both numerator and denominator in \cref{mu_K} with $e^{\beta K \theta}$, upon using Binomial Theorem, a lower bound on \cref{mu_K} can be obtained as
\begin{equation}\label{LB}
    \mu_K(a^\star \mid \beta) \geq \frac{e^{\beta\hat{\Phi}(a^\star)}e^{\beta K \theta}}{(e^{\frac{1}{2}\beta}+e^{\frac{\beta K \theta}{N}})^N}.
\end{equation}

Note that the right-hand-side of \cref{LB} is also an increasing function of $\beta$ for any $\theta \neq \frac{N}{2k}$. This can be verified by taking its derivative with respect to $\beta$. Moreover, this lower bound matches with the true value of $\mu_K(a^\star \mid \beta)$ when $\beta = 0$ and $\beta \rightarrow \infty$. This means the lower bound in \cref{LB} is asymptotically tight if $\theta \neq \frac{N}{2k}$. Now we are ready to propose our final theorem which quantifies the interplay between $\beta$ and $K$.

\vspace{5pt}

\begin{theorem}
Suppose LLL is performed on a networked coordination game with task difficulty $\theta \neq \frac{N}{2k}$. Then, if 
\begin{equation}\label{Boundforbeta}
\beta \geq {\bigg|\frac{K\theta}{N}-\frac{1}{2}\bigg|}^{-1} \bigg(\frac{\log(1-\delta)}{N}-\log(1-e^{\frac{\log(1-\delta)}{N}})\bigg)
\end{equation}
LLL is guaranteed to force all agents to play the optimal action profile with probability at least $1-\delta$. In other words, $\mu(a^\star \mid \beta) \geq 1-\delta$.

\vspace{5pt}

\end{theorem}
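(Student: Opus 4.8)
The plan is to start from the asymptotically tight lower bound already derived in \cref{LB},
\[
\mu_K(a^\star \mid \beta) \geq \frac{e^{\beta\hat{\Phi}(a^\star)}e^{\beta K \theta}}{\big(e^{\beta/2}+e^{\beta K \theta/N}\big)^{N}},
\]
and to show that this lower bound alone already exceeds $1-\delta$ whenever $\beta$ meets the threshold in \cref{Boundforbeta}. Since $\mu_K(a^\star\mid\beta)$ dominates the right-hand side, it suffices to make the right-hand side at least $1-\delta$. The approach is thus to first collapse this ratio into a single clean expression, and then invert the resulting monotone relation in $\beta$.

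First I would substitute the explicit value of $\hat{\Phi}(a^\star)$, using the two cases established earlier. When $\theta > N/(2K)$ we have $a^\star = \mathbb{0}_N$ and $\hat{\Phi}(a^\star)=0$, so the numerator is $e^{\beta K\theta}=\big(e^{\beta K\theta/N}\big)^{N}$; when $\theta < N/(2K)$ we have $a^\star = \mathbb{1}_N$ and $\hat{\Phi}(a^\star)=N/2-K\theta$, so the numerator is $e^{\beta N/2}=\big(e^{\beta/2}\big)^{N}$. In either case, factoring out of the denominator the exponential that matches the numerator collapses the bound to the unified form
\[
\mu_K(a^\star \mid \beta) \geq \frac{1}{\big(1 + e^{-\beta\,|K\theta/N - 1/2|}\big)^{N}},
\]
where the absolute value absorbs the sign of $K\theta/N-1/2$ across the two cases. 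This unification is the conceptual heart of the argument: the potential of the optimal profile is exactly tuned so that the surviving term is governed by the single gap $|K\theta/N-1/2|$, which the hypothesis $\theta\neq N/(2K)$ guarantees to be nonzero.

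Then I would impose $\mu_K(a^\star\mid\beta)\geq 1-\delta$ on this unified bound and solve for $\beta$. Writing $v \Equaldef \log(1-\delta)/N$, the requirement $\big(1+e^{-\beta|K\theta/N-1/2|}\big)^{N}\leq (1-\delta)^{-1}$ rearranges to $e^{-\beta|K\theta/N-1/2|}\leq e^{-v}-1$, and taking logarithms gives $\beta\,|K\theta/N-1/2|\geq -\log(e^{-v}-1)$. The final manipulation is the identity $-\log(e^{-v}-1)=v-\log(1-e^{v})$, obtained by pulling a factor $e^{-v}$ out of $e^{-v}-1$, which reproduces exactly the bracketed factor of \cref{Boundforbeta} and completes the proof after dividing by $|K\theta/N-1/2|$.

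I do not anticipate a single hard step so much as careful bookkeeping. I must check that every logarithm is well-defined: since $\delta\in(0,1)$ gives $v<0$, we have $e^{v}<1$ so $1-e^{v}>0$, and $e^{-v}>1$ so $e^{-v}-1>0$. I must also track that the inequality direction is preserved when inverting $(\cdot)^{N}$ and when applying the decreasing map $e^{-\beta(\cdot)}$, and confirm that the two sign cases genuinely fold into the one absolute value. The role of the hypothesis $\theta\neq N/(2K)$ is precisely to keep $|K\theta/N-1/2|$ strictly positive, so that the final division is legitimate and the threshold is finite.
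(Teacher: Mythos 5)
Your proposal is correct and follows exactly the route the paper takes: the paper's own proof is the one-line instruction to set the right-hand side of \cref{LB} equal to $1-\delta$ and solve for $\beta$, and you have simply carried out that computation in full, including the case split on the sign of $K\theta/N-1/2$, the collapse to $\big(1+e^{-\beta|K\theta/N-1/2|}\big)^{-N}$, and the logarithmic identity recovering the bracketed factor in \cref{Boundforbeta}. Your algebra and sign bookkeeping check out, so this is a valid (and more detailed) rendering of the same argument.
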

\vspace{5pt}
\begin{proof}
    The proof follows immediately from setting the right-hand-side of \cref{LB} to ($1-\delta$) and solving for $\beta$.
\end{proof}

\section{Numerical results}

\begin{figure}[!htb]
    \centering
\includegraphics[width=0.85\columnwidth]{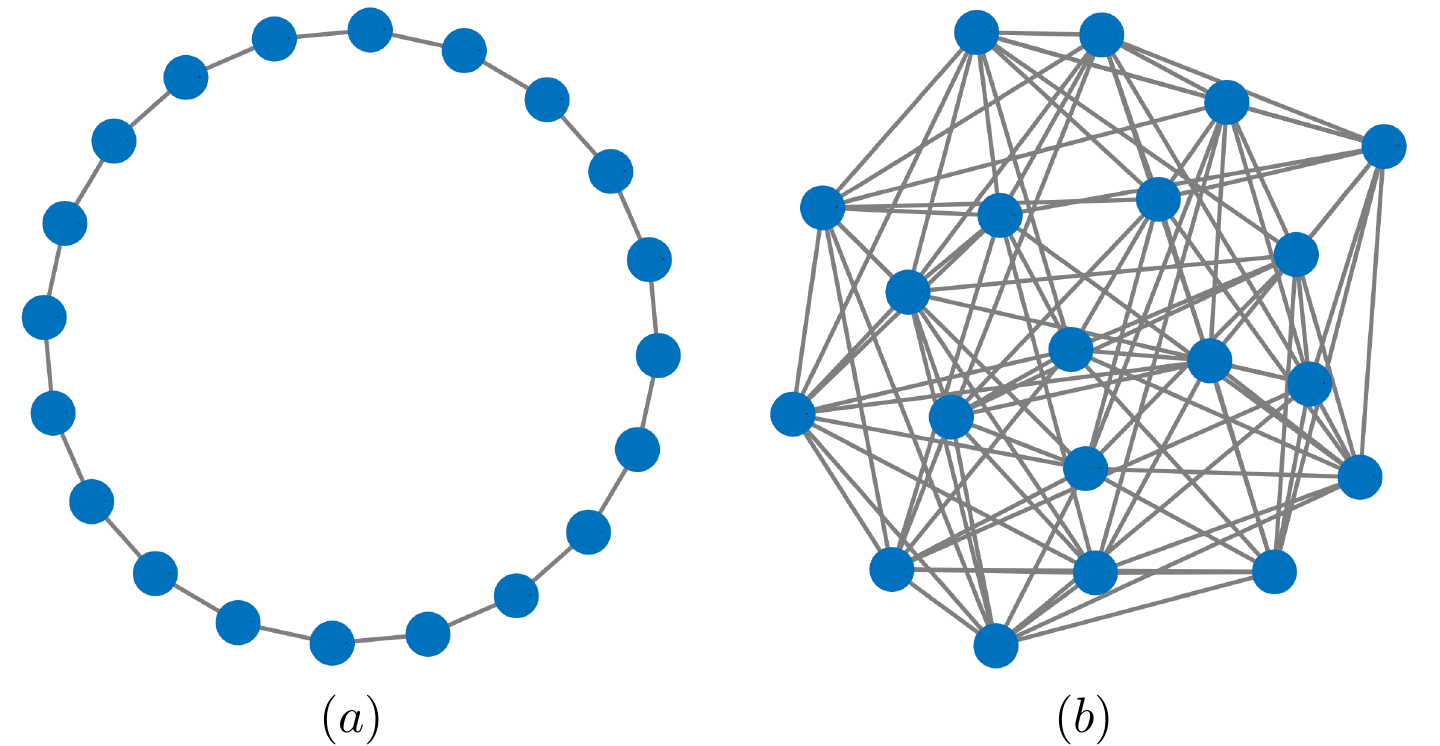}
    \caption{Two examples of connected regular graphs with $N=20$ vertices: $(a)$ $K=2$ and  $(b)$ $K=10$.}
    \label{fig:graphs}
\end{figure}
The above figure shows graphs of the same number of nodes but different connectivity degrees. Simulations are run on these networks and  $\mu_K(a^\star\mid \beta)$ as functions of $\beta$ are plotted for different values of $K$. 

\begin{figure}[!htb]
    \centering
\includegraphics[width=0.85\columnwidth]{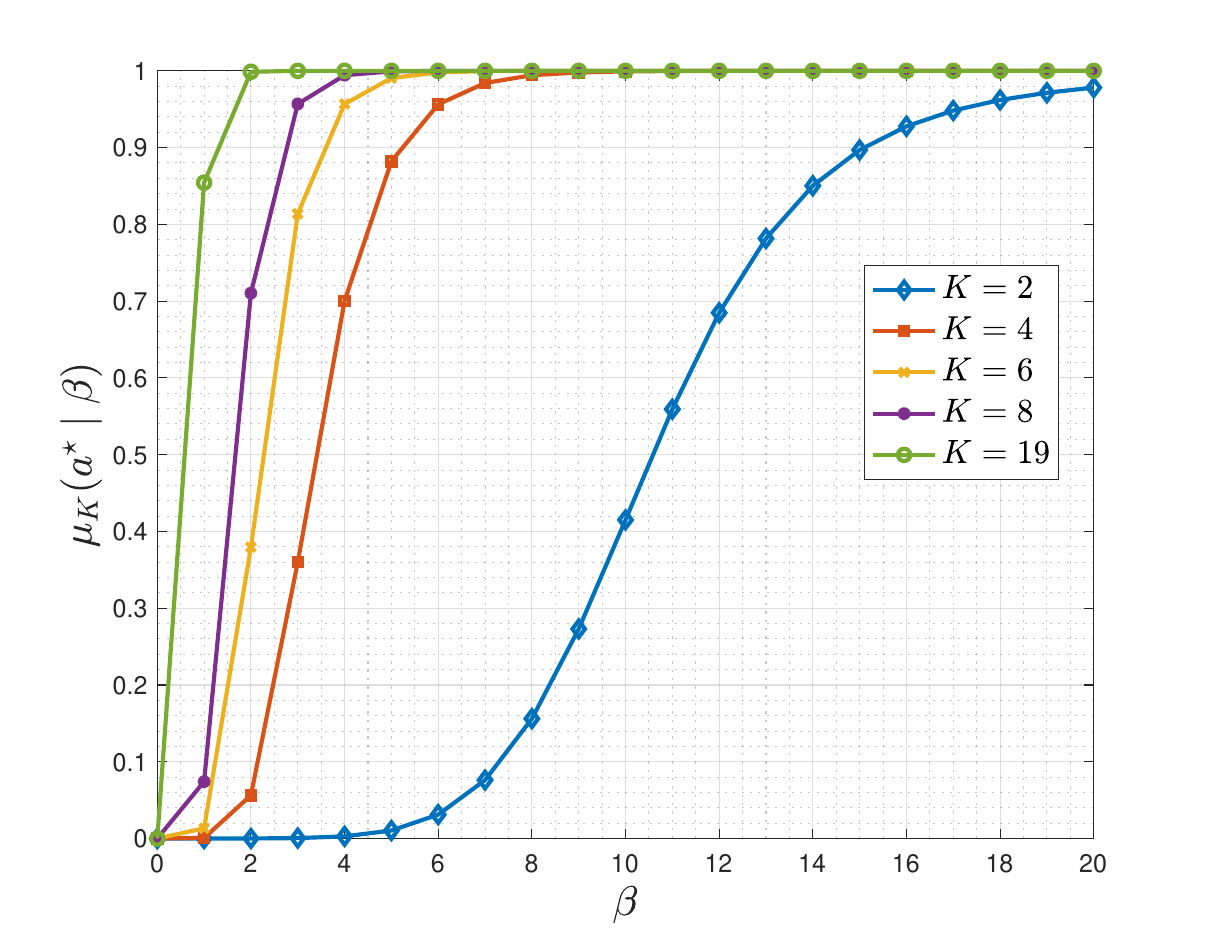}
    \caption{Values of $\mu_K(a^\star\mid \beta)$ for different $K$ when $\theta=5.1$.}
    \label{fig:monotonicity1}
\end{figure}

\begin{figure}[!htb]
    \centering
\includegraphics[width=0.85\columnwidth]{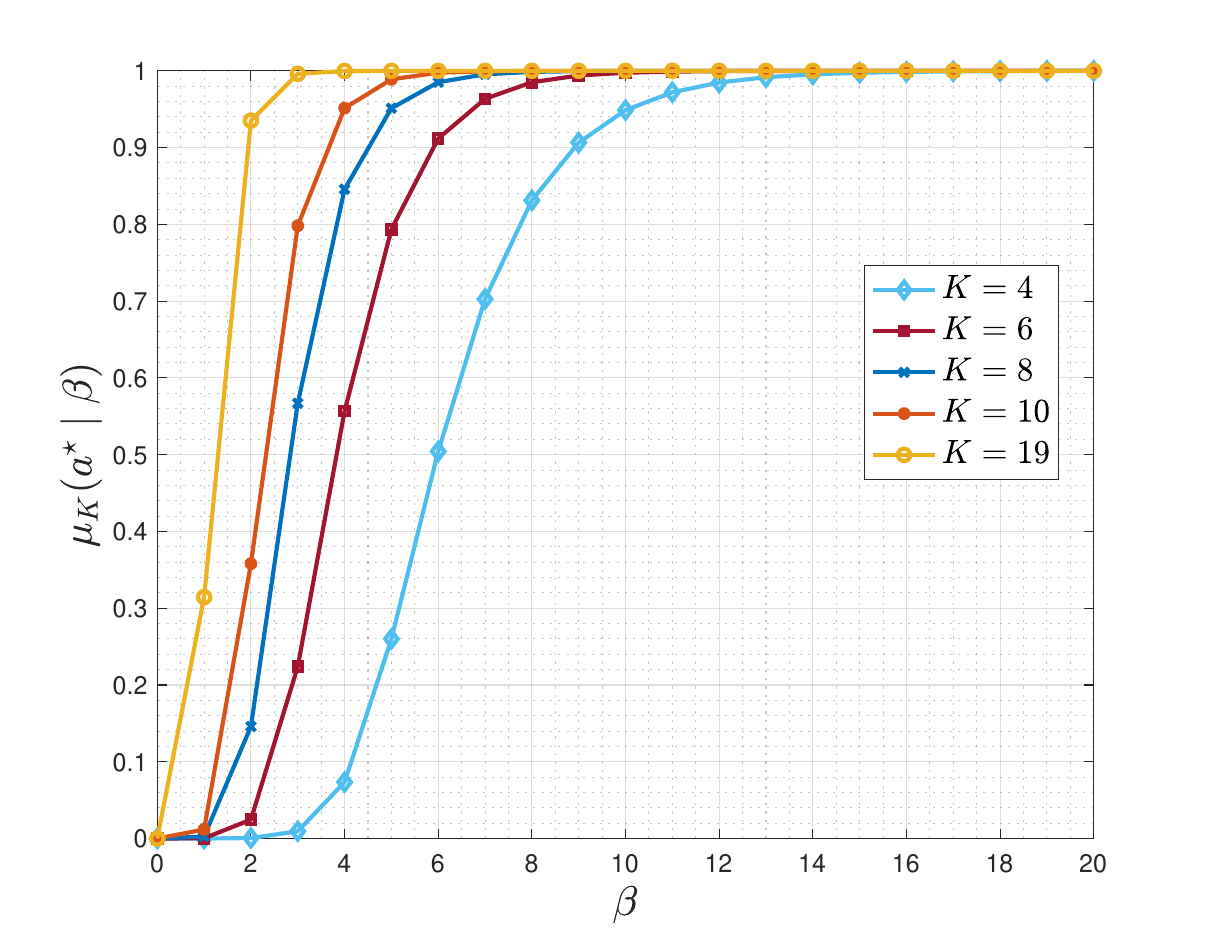}
    \caption{Values of $\mu_K(a^\star\mid \beta)$ for different $K$ when $\theta=3$.}
    \label{fig:monotonicity2}
\end{figure}
We can clearly identify the monotonic behavior of $\mu_K(a^\star\mid \beta)$ in both $\beta$ and $K$ for $\theta = 5.1$ and $\theta = 3$. However, for $\theta = 5.1$, the monotonicity holds for every possible regular graph of degree $K\geq2$, whereas for $\theta = 3$, the monotonicity only holds for $K\geq4$.
%\lipsum[1]

\section{Conclusions and future work}
In this paper we discussed about a coordination game played on a regular network of agents. We showed that the game is potential and gave a closed-form potential function. We proved that the maximizers of the potential function are strictly one of the Nash-equilibria of the original game. We exploited LLL on the network. Upon analysis on the steady state distribution induced by LLL and numerical experiments, we showed that for regular networks with sufficiently large connectivity, there exist a trade-off between connectivity and rationality: better connectivity would allow for a smaller rationality in LLL to achieve the same level of success probability. We also gave an upper bound for the minimal rationality, as a function of connectivity, to guarantee to effectiveness of LLL in the long run. 

%\par

We left the reversed monotonic behavior of steady state probability on poorly connected networks for future work. A hypothesis is presented in the paper: the potential value, which is regarded as a random variable defined on the action space, would show pure monotonicity on connectivity. A rigorous proof for this hypothesis is currently under investigation. We are  also interested in the finite-time analysis, specifically the expected first hitting time of LLL. Stochastic learning in networked coordination games is a rich topic with many open problems, which we are going to explore using analytic and algorithmic approaches.

\bibliography{ref}
    \bibliographystyle{ieeetr}

\end{document}